\newtheorem{theorem}{Theorem}
\newtheorem{exm}{Example}
\newtheorem{defin}[theorem]{Definition}
\newtheorem{lem}[theorem]{Lemma}
\newtheorem{thm}[theorem]{Theorem}
\theoremstyle{definition}{\bf}
\newtheorem*{rem}{Remarks}
\newtheorem*{key}{Key words}
\def\C {{\mathbb{C}}}
\def\x{\mathbf{x}}
\def\X{\mathbf{X}}
\def\p{\mathbf{p}}
\def\q{\mathbf{q}}
\def\T{\mathrm{T}}
\def\llambda{\boldsymbol{\lambda}}
\def\alphahei{\boldsymbol{\alpha}}
\def\0{{\mathbf{ 0}}}
\def\N{{\mathbb{N}}}
\def\rank{{\mathrm{rank}}}
\def\V{{\mathbb{V}}}
\def\F{{\mathbf{F}}}
\def\G{{\mathbf{G}}}
\def\H{{\mathbf{H}}}
\def\J{{\boldsymbol{\mathrm{J}}}}
\title{A New Deflation Method For Verifying the Isolated Singular Zeros of Polynomial Systems}
\author{Jin-San Cheng$^{b,c}$, \hspace{5mm} Xiaojie Dou$^{a}$, \hspace{5mm}  Junyi Wen$^{b,c}$
\\
$^{a}$College of Science, Civil Aviation University of China, Tianjin\\
 $^{b}$KLMM, Academy of Mathematics and Systems Science\\
Chinese Academy of Sciences, Beijing\\
$^{c}$University of Chinese Academy of Sciences, Beijing\\
jcheng@amss.ac.cn\hspace{2mm} xjdou@amss.ac.cn\hspace{2mm}  wenjunyi15@mails.ucas.ac.cn
}
\date{\today}
\begin{document}

\maketitle
        \begin{abstract}
In this paper, we develop a new deflation technique for refining or verifying the isolated singular zeros of polynomial systems. Starting from a polynomial system with an isolated singular zero, by computing the derivatives of the input polynomials directly or the linear combinations of the related polynomials, we construct a new system, which can be used to refine or verify the isolated singular zero of the input system. In order to preserve the accuracy in numerical computation as much as possible, new variables are introduced to represent the coefficients of the linear combinations of the related polynomials. To our knowledge, it is the first time that considering the deflation problem of polynomial systems from the perspective of the linear combination. Some acceleration strategies are proposed to reduce the scale of the final system. We also give some further analysis of the tolerances we use, which can help us have a better understanding of our method.
%
%
The experiments show that our method is effective and efficient. Especially, it works well for zeros with high multiplicities of large systems. It also works for isolated singular zeros of non-polynomial systems.

\begin{key}
Polynomial system, deflation method, isolated singular zero, interval verification
\end{key}
\end{abstract}

\section{Introduction}

  Solving polynomial systems with singular zeros is always a challenge in algebraic and geometric computation. For an isolated simple zero of a polynomial system, the classical Newton's method is widely used and quadratic convergent. However, for singular zeros of a polynomial system, Newton's method is not fit for the original system directly because it converges slowly or even doesn't converge in a bad situation. 
  %
 What's more, it is an ill-posed problem to compute an isolated singular zero of a polynomial system or a nonlinear system, since a small perturbation of coefficients may transform an isolated singular zero into a cluster of simple zeros.

 Therefore, finding methods to keep the quadratic convergence of Newton's method for singular zeros is a way to handle this problem.
 Given a polynomial system with an isolated singular zero, we can construct a new system owing the same singular zero as an isolated simple one. Based on this idea, in recent years, there are many symbolic or symbolic-numerical methods coming up to deal with this problem. The basic idea is the deflation techniques 
 \cite{dayton,giusti1,giusti2,lecerf,hao,hauensten,hau,zeng}, which usually have two basic strategies: adding new equations only or both new equations and new variables to the original system.

 Deflation for an isolated singular solution originated from the ideas of Ojika \cite{Ojika,Ojika1,Ojika2}. T. Ojika et al. present a deflation algorithm for determining the multiple zeros for a system of nonlinear equations. Through triangulating the Jacobian matrix of the original system at an approximate zero, new equations, which comes from the minors of the Jacobian matrix, are introduced to the original system to reduce the multiplicity until they get a system which is regular at the singular zero.

 In \cite{giusti3}, Giusti and Yakoubsohn propose a construction, which is based on two operations: deflating and kerneling, to determine a regular system without adding new variables. In the deflating, all the partial derivatives of the polynomials, which are zero at the multiple zero, are introduced to replace the corresponding polynomials. The kerneling operation consists of adding the polynomials given by the nonzero numerators of the coefficients of the Schur complement of the Jacobian matrix of the original system to the original system.

 In \cite{hauensten}, Hauenstein and Wampler define a strong deflation by only adding new equations coming from the one order differential of the Jacobian matrix of the original system to the original system. Different from \cite{giusti3}, at each iteration step, both the number and the degree of the added equations are reduced.

 In \cite{mourrain1}, Mourrain et al. give a method which uses a single linear differential form defined from the Jacobian matrix of the input system, and defines the deflated system by applying this differential form to the original system.


 These above methods do introduce new equations and finally get a new system owing the isolated singular zero of the original system as a simple zero. In order to get the new polynomials, one needs to compute the determinant of some polynomial matrices. Thus the degree of the polynomials in the new system may be very high.

 In the following, denote $n$ as the number of both the variables and the equations in the original system, and $\mu$ as the multiplicity of the isolated singular zero of the original system.

  In \cite{Yama}, Yamamoto introduces new equations and new variables to the original system simultaneously.  New variables are used to bring some perturbations of the original system and the Jacobian matrix of the original system, which produce new equations.

 In \cite{lvz,lvz1,lvz0}, Leykin et al. present an effective modification of Newton's method to restore quadratic convergence for isolated singular solutions of polynomial systems. Different from \cite{Yama}, new variables are only introduced to the Jacobian matrix of the original system, which produce new equations.
 Meanwhile, they also prove that the number of deflation stages is bounded by $\mu$.

 In \cite{dayton1}, Dayton and Zeng modify the method in \cite{lvz} and further prove that the number of deflation steps is bounded by the depth of the dual space. For the special case of breathe one, they also propose a modified deflation method, which is based on duality analysis, to reduce the final size $2^{\mu-1}n\times2^{\mu-1}n$ of deflated system in \cite{lvz} to $\mu n\times \mu n$.

 In \cite{rump1}, by introducing a smoothing parameter to the original system and $n-1$ new variables to the Jacobian matrix of the original system, which produces new equations, Rump and Graillat consider the case of the double zero of the original system. In \cite{zhi2}, based on the parameterized multiplicity structure, Li and Zhi generalize the algorithm in \cite{rump1} to deflate the breath-one isolated singular zero of the original system. Their final deflated regular system is of size $\mu n\times \mu n$.


 In \cite{mourrain}, based on the given multiplicity structure of the original system, which depends on the accuracy of the given approximate multiple zero, Mantzaflaris and Mourrain give a method to find a (small) perturbed system of the original system 
 and then first compute a deflated system in one deflation step. The size of the final deflated system is equal to $\mu n\times \mu n$.


 In \cite{zhi3}, by lifting the independent perturbations in the first-order differential system appearing in \cite{Yama} back to the original system, Li and Zhi modify the method in \cite{Yama} and also prove that the modified deflation technique terminates after a finite number of steps bounded by the depth of the dual space. The size of the final modified regularized system is bounded by $2^{\mu-1}n\times2^{\mu-1}n$.


 In \cite{mourrain1}, by introducing some variables to represent the coefficients of the dual basis, Mourrain et al. give a method to deflate the original system and determine the multiplicity structure simultaneously. They also show that the number of variables and equations in this method is bounded by $n+n\mu(\mu-1)/2$ and $n\mu+n(n-1)(\mu-1)(\mu-2)/4$. However, one point worth noting is that this method needs to know the monomial basis of the original system first.

 These methods introduce new variables and new equations to the original system simultaneously. By repeatedly using these deflation constructions, they will get an augmented system finally, which has an isolated simple zero, whose partial projection corresponds to the isolated singular zero of the original system.

 {\bf Main contribution.}  In this paper,  given a polynomial system $\F\subset\C[\x]$ with an isolated singular zero $\p$, by computing the derivatives of the input polynomials directly or the linear combinations of the related polynomials, we propose a new deflation method to construct a final deflated system $\widetilde{\F}'(\x,\alphahei)$, which has an isolated simple zero $(\p,\hat{\alphahei})$, whose projection corresponds to the isolated singular zero $\p$ of the input system. New variables $\alphahei$ are introduced to represent the coefficients of the linear combinations of the related polynomials to ensure the accuracy of the numerical implementation. Moreover, we also prove that the size of our deflation system depends on the depth or the multiplicity of $\p$.

 Compared to the previous methods, our method has the following differences:
 \begin{enumerate}
   \item For the input system $\F$, we can, if needed, compute the derivatives of every $f_i$ to get the needed polynomials, which are regular at $\p$ at the beginning. Then, we put all these polynomials together to construct a system $\F_0$, such that the rank $r$ of its Jacobian matrix at $\p$ is maximal. In some cases, we have $r = n$, which means that we need not introduce new variables.
   \item We compute the derivatives of the linear combinations of the related polynomials to get some polynomials which are regular at $\p$. Here we introduce new variables to represent the coefficients of the linear combination.
   \item Considering that we  know only the approximate zero $\tilde{\p}$ in actual computations, we use a tolerance $\theta$ to judge if a polynomial is $\theta$-regular or $\theta$-singular at $\tilde{\p}$ and another tolerance $\varepsilon$ to judge the numerical rank of the Jacobian matrix. As long as the tolerance $\theta$ is chosen properly, we will get the same judgement in numerical case as in the exact case. Thus, our deflation system usually has the same isolated zero as the input system. Inspired by the work \cite{lvz} of Leykin et al., we also give some further analysis on the tolerances $\theta$ and $\varepsilon$, which tells us that our final system is a perturbed system with a bounded perturbation in the worst case. To make our final system as accurate as possible, we also analyse the case that the tolerance $\theta$ is not introduced.
 \end{enumerate}

 Thanks to the above acceleration strategies, the size of the final system in our actual computations is much less than that we give in theory.
 Furthermore, we implement our method in Matlab.
%
 The experiments show that our method is effective and efficient, especially for large systems with singular zeros of high multiplicities. Besides, for the non-polynomial systems, our method is also applicable.

 The paper is organized as below. We introduce some notations and preliminaries in the next section. In Section 3, we give a new deflation idea to construct a deflated square system from the input system with an isolated singular zero. An effective version of our method is given in Section 4. Some numerical experiment results are given to demonstrate the performance of our algorithm in Section 5 and at last, we draw a conclusion in Section 6.

\section{Notations and Preliminaries}

Let $\mathbb{C}$ be the complex field and $\mathbb{C}[\mathbf{x}]=\mathbb{C}[x_1,\ldots,x_n]$ be the polynomial ring. Denote $\F=\{ f_1,f_2,\ldots,f_n\}\subset \C[\x]$ as a polynomial system and $\deg(f_i)$ as the degree of the polynomial $f_i$. Similarly, $\deg(\F)=\max\limits_{f_i\in\F}{\deg(f_i)}$. Let $\mathbf{p}=(p_1,\ldots,p_n)\in\mathbb{C}^n$. $\F(\mathbf{p})=\0$ denotes that $\mathbf{p}$ is a zero of $\F(\x)=\mathbf{0}$.

Let $\V(\F)\subset \C^n$ denote the variety defined by $\F$ and  $\dim\V(\F)$ denote the dimension of $\V(\F)$.

Let $\mathbf{d}^{\boldsymbol{\gamma}}_\mathbf{x}:\mathbb{C}[\mathbf{x}]\rightarrow \mathbb{C}[\x]$ denote the differential functional defined by
$$\mathbf{d}^{\boldsymbol{\gamma}}_\mathbf{x}(f)=\frac{1}{\gamma_1!\cdots\gamma_n!}\cdot\frac{\partial^{|{\boldsymbol{\gamma}}|}f}{\partial x^{\gamma_1}_1\cdots\partial x^{\gamma_n}_n}, \qquad \forall f\in\mathbb{C}[\mathbf{x}],$$
where ${\boldsymbol{\gamma}}=(\gamma_1,\ldots,\gamma_n)\in \mathbb{N}^n$ with $\mathbb{N}=\{0,1,2,\ldots\}$ and $|\boldsymbol{\gamma}|=\sum\limits_{i=1}^n \gamma_i$.

Denote $\rank(A)$ as the rank of a matrix $A$. Denote $\J(\F)$ as the Jacobian matrix of $\F$. That is,

$$
\J(\F)=
\left(
\begin{array}{ccc}
  \frac{\partial f_1}{\partial x_1} & \ldots & \frac{\partial f_1}{\partial x_n} \\
  \vdots & \ddots & \vdots \\
  \frac{\partial f_n}{\partial x_1} & \ldots & \frac{\partial f_n}{\partial x_n}\\
\end{array}
\right).
$$

For a polynomial $f\in\mathbb{C}[\mathbf{x}]$, let $\J(f)$ denote $(\frac{\partial f}{\partial x_1},\frac{\partial f}{\partial x_2},\ldots,\frac{\partial f}{\partial x_n})$ and $\J_i(f)=\frac{\partial f}{\partial x_i}$. Let $\boldsymbol{\mathrm{J}}(\F)(\mathbf{p})$ denote the value of a function matrix $\J(\F)$ at a point $\mathbf{p}$, similarly for $\J(f)(\mathbf{p})$.

\begin{defin}
 An {\bf isolated zero} of $\F(\mathbf{x})=\0$ is a point $\mathbf{p}\in \mathbb{C}^n$ which satisfies:
$$\exists\ \varepsilon>0: \{\mathbf{y}\in\mathbb{C}^n:\|\mathbf{y}-\mathbf{p}\|<\varepsilon\}\cap \F^{-1}(\mathbf{0})=\{\mathbf{p}\},$$
where $\F^{-1}(\mathbf{0})\triangleq\{\p\in\C^n : \F(\p)=\0\}$.
\end{defin}

\begin{defin}
 We call an isolated zero $\mathbf{p}\in\C^n$ of $\F(\mathbf{x})=\0$ an {\bf isolated singular zero}  if and only if
 $$\rank(\J(\F)(\mathbf{p}))<n.$$
 Otherwise, $\p$ is an {\bf isolated regular( simple) zero} of $\F(\mathbf{x})=\0$.
\end{defin}

The {\bf Taylor series expansion} (Taylor expansion for short) of $f\in \C[\x]$ at $\p=(p_1,\ldots,p_n)\in\C^n$ is
\begin{equation}\label{eqn-taylor}
f(\x)=f(\p)+\sum_{j=1}^{n}\frac{\partial f(\p)}{\partial x_j}(x_j-p_j)+\sum_{1\le i,j\le n}\frac{\partial^2 f(\p)}{\partial x_i \partial x_j}(x_i-p_i)(x_j-p_j)+\ldots.
\end{equation}

\begin{defin} \label{def-reg}
 Let ${\p}\in \C^n$ and $f(\p)=0$. We say $f\in\C[\x]$ is {\bf singular} at $\p$ if
 $$\frac{\partial f(\p)}{\partial x_j}=0, \forall 1\le j\le n.$$
 Otherwise, we say $f$ is {\bf regular} at $\p$.
 \end{defin}

\begin{defin} \label{def-appreg}
Let $f\in\C[\x]$, $\tilde{\p}\in \C^n$ and a tolerance $\theta>0$, s.t. $|f(\tilde{\p})|<\theta$. We say $f$ is {\bf $\theta$-singular} at $\tilde{\p}$ if
 $$\left|\frac{\partial f(\tilde{\p})}{\partial x_j}\right|<\theta, \forall 1\le j\le n.$$
 Otherwise, we say $f$ is {\bf $\theta$-regular} at $\tilde{\p}$.
 \end{defin}

 \begin{lem}\label{lem-partial}
 Let $f\in\mathbb{C}[\mathbf{x}]\setminus \C$, s.t. $f(\mathbf{p})=0$. Then there exists at least a $\boldsymbol{\gamma}\in \mathbb{N}^n$, s.t. $\mathbf{d}^{\boldsymbol{\gamma}}_\mathbf{x}(f)$ is regular at $\p$.
\end{lem}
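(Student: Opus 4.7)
The plan is to work with the Taylor expansion of $f$ at $\p$ given in \eqref{eqn-taylor} and read off a suitable $\boldsymbol{\gamma}$ directly from the coefficient of lowest order that is nonzero. Writing the expansion compactly as $f(\x)=\sum_{\boldsymbol{\alpha}\in\N^n} c_{\boldsymbol{\alpha}}(\x-\p)^{\boldsymbol{\alpha}}$, I first observe that $c_{\boldsymbol{\alpha}}=\mathbf{d}^{\boldsymbol{\alpha}}_{\x}(f)(\p)$ by the very definition of the differential functional in the paper. Because $f$ is not constant and $f(\p)=0$, some coefficient with $|\boldsymbol{\alpha}|\ge 1$ must be nonzero; otherwise the Taylor expansion would vanish identically and $f$ would be the zero polynomial, contradicting $f\in\C[\x]\setminus\C$.

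Next I would pick the smallest order for which the Taylor expansion has a nonzero coefficient. Let $d\ge 1$ be the minimum of $|\boldsymbol{\alpha}|$ over all $\boldsymbol{\alpha}$ with $c_{\boldsymbol{\alpha}}\neq 0$, and fix some $\boldsymbol{\alpha}$ realizing this minimum. Since $d\ge 1$, there is an index $j$ with $\alpha_j\ge 1$, and I set $\boldsymbol{\gamma}=\boldsymbol{\alpha}-\mathbf{e}_j$, where $\mathbf{e}_j$ is the standard basis vector. This $\boldsymbol{\gamma}$ is the candidate I claim makes $\mathbf{d}^{\boldsymbol{\gamma}}_{\x}(f)$ regular at $\p$ in the sense of Definition \ref{def-reg}.

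The verification then splits into two routine checks. First, $\mathbf{d}^{\boldsymbol{\gamma}}_{\x}(f)(\p)=c_{\boldsymbol{\gamma}}=0$ because $|\boldsymbol{\gamma}|=d-1<d$ and $d$ was the minimal order of a nonzero Taylor coefficient. Second, differentiating $\mathbf{d}^{\boldsymbol{\gamma}}_{\x}(f)=\frac{1}{\boldsymbol{\gamma}!}\partial^{\boldsymbol{\gamma}}f$ once more in $x_j$ gives $\frac{1}{\boldsymbol{\gamma}!}\partial^{\boldsymbol{\gamma}+\mathbf{e}_j}f=\frac{(\boldsymbol{\gamma}+\mathbf{e}_j)!}{\boldsymbol{\gamma}!}\mathbf{d}^{\boldsymbol{\alpha}}_{\x}(f)=\alpha_j\,\mathbf{d}^{\boldsymbol{\alpha}}_{\x}(f)$, whose value at $\p$ is $\alpha_j c_{\boldsymbol{\alpha}}\neq 0$. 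Hence $\mathbf{d}^{\boldsymbol{\gamma}}_{\x}(f)$ vanishes at $\p$ but has a nonvanishing partial derivative there, i.e.\ it is regular at $\p$.

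There is no real obstacle here; the only thing to be slightly careful about is the bookkeeping of the factorial factors that relate $\mathbf{d}^{\boldsymbol{\alpha}}_{\x}(f)(\p)$ to the Taylor coefficients and to $\partial^{\boldsymbol{\gamma}+\mathbf{e}_j}f(\p)$. A convenient shortcut, if one prefers to avoid the factorials altogether, is to treat $\mathbf{d}^{\boldsymbol{\gamma}}_{\x}$ as a scalar multiple of $\partial^{\boldsymbol{\gamma}}$ (a nonzero scalar), so regularity of $\mathbf{d}^{\boldsymbol{\gamma}}_{\x}(f)$ is equivalent to regularity of $\partial^{\boldsymbol{\gamma}}f$, and then one only has to argue with plain partial derivatives at $\p$.
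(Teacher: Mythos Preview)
Your proof is correct and follows essentially the same approach as the paper's: both argue that since $f$ is a nonconstant polynomial vanishing at $\p$, some Taylor coefficient $\mathbf{d}^{\boldsymbol{\gamma}'}_{\x}(f)(\p)$ is nonzero, and backing off one derivative yields a $\boldsymbol{\gamma}$ with $\mathbf{d}^{\boldsymbol{\gamma}}_{\x}(f)$ regular at $\p$. The paper's proof is much terser (it translates to $\p=\0$, writes $f$ as a sum of homogeneous pieces, and then passes from the existence of a nonzero $\mathbf{d}^{\boldsymbol{\gamma}'}_{\x}(f)(\p)$ to the desired $\boldsymbol{\gamma}$ in a single sentence), whereas you spell out explicitly the choice $\boldsymbol{\gamma}=\boldsymbol{\alpha}-\mathbf{e}_j$ from a minimal-order nonzero coefficient and verify the factorial bookkeeping; this is a welcome clarification but not a different route.
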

\begin{proof}
 Without loss of generality, we assume $\p=\0$.  Then $f$ can be rewritten as a sum of homogeneous polynomials as
 $$f=\sum\limits_{d=1}^{deg(f)}f_d.$$
 Since $f\not\equiv 0$, there exists at least a $\boldsymbol{\gamma}'\in \N^n$ such that $\mathbf{d}^{\boldsymbol{\gamma}'}_\mathbf{x}(f)(\p)\neq 0$. Thus there exists at least a $\boldsymbol{\gamma} \in \N^n$ such that $\mathbf{d}^{\boldsymbol{\gamma}}_\mathbf{x}(f)$ is regular at $\p$.
\end{proof}

 Now, we give an example to explain this and illustrate Definitions \ref{def-reg}, \ref{def-appreg} and Lemma \ref{lem-partial}.
\begin{exm}
Let $f=x_1+3\,x_3+4\,x_4-x_1^2+x_3^2-x_4^2-x_2^3$. For the exact point $\p=(0,0,0,0)$, we have the Taylor expansion of $f$ at $\p$ is:
$$f(\x)=x_1+3\,x_3+4\,x_4-x_1^2+x_3^2-x_4^2-x_2^3.$$
 Since $$|f(\p)|=0,\ |\J_2(f)(\p)|=0,\ |\J_i(f)(\p)|\neq0,\ i=1,3,4,$$ we know that $f$ is regular at $\p$. So is $\mathbf{d}^{(0,2,0,0)}_{\x}(f)=-3\,x_2$.

Similarly, for the approximate point $\tilde{\p}=(0.001,-0.001,0.002,-0.001)$ and a tolerance $\theta=0.01$, we have:
%
\begin{equation*}
\begin{split}
   f(\x)= & 0.003002001+0.998(x_1-0.001)-3\cdot10^{-6}(x_2+0.001)+3.004(x_3-0.002) \\
     & +4.002(x_4+0.001)-(x_1-0.001)^2+3\cdot10^{-3}(x_2+0.001)^2+(x_3-0.002)^2 \\
     & -(x_4+0.001)^2-(x_2+0.001)^3.
\end{split}
\end{equation*}

\%begin{equation}
%
 %
 Since $|f(\tilde{\p})|= 0.003002001<\theta$, $|\frac{\partial f}{\partial x_2}(\tilde{\p})|=3\cdot10^{-6}<\theta$, $|\frac{\partial f}{\partial x_1}(\tilde{\p})|=0.998>\theta$, $|\frac{\partial f}{\partial x_3}(\tilde{\p})|=3.004>\theta$, $|\frac{\partial f}{\partial x_4}(\tilde{\p})|=4.002>\theta$, thus $f$ is $\theta$-regular at $\tilde{\p}$.
\end{exm}
From this example, it's easy to see that when compared with the exact case, the approximate zero $\tilde{\p}$ brings a small perturbation in the coefficients of the Taylor expansion of $f$ at $\tilde{\p}$. However, once given a proper $\theta$, we could acquire the same judging result as the exact case. For the above example,
 $f$ is regular at $\p$ and it is also $\theta$-regular at $\tilde{\p}$.

\begin{defin} \label{def-delta}
 Denote the operation set $\Delta\triangleq\{{+},\ {\cdot},\ {\partial}\}$, where $``\ {+}\ "$ denotes the sum of two polynomials, $``\ {\cdot}\ "$ denotes scalar multiplication and $``\ {\partial}\ "$ the differential of a polynomial. Given a polynomial system $\F=\{f_1,\ldots,f_n\}\subset \C[\x]$  and $\p\in\C^n$ such that $\F(\p)=\mathbf{0}$, we define a polynomial set $\Delta_\p(\F)$, which satisfies:
 \begin{enumerate}[$(1)$]
\item  $\F\subset \Delta_\p(\F)$;
\item  $\{a\,h|h\in\Delta_\p(\F),a\in\C\backslash\{0\}\}\subset \Delta_\p(\F)$;
\item  $\{h_1+h_2|h_1(\p)+h_2(\p)=0,h_1,h_2\in \Delta_\p(\F)\}\subset \Delta_\p(\F)$;
\item  $\{\frac{\partial h}{\partial x_i}|\frac{\partial h}{\partial x_i}(\p)=0, i\in\{1,\ldots,n\}, h\in\Delta_\p(\F)\}\subset \Delta_\p(\F)$.
\end{enumerate}
Especially, for one polynomial $f\in\C[\x]$, we have the corresponding set $\Delta_\p(f)$.
\end{defin}


The following lemma shows the relationship between the polynomials in $\Delta_\p(\F)$ and the polynomials in $\F$.
\begin{lem}\label{lem-g}
Let $\F=\{f_1,\ldots,f_n\}\subset \C[\x]$ and $\p\in\C^n$, s.t. $\F(\p)=\mathbf{0}$. $\forall g\in\Delta_\p(\F)$, we have
\begin{equation}\label{eqn-g}
g=\sum_{i=1}^n\sum_j a_{i,j}\frac{\partial^{|\boldsymbol{\gamma}_{i,j}|} f_i}{\partial \x^{\boldsymbol{\gamma}_{i,j}}},
\end{equation}
where $a_{i,j}\in\C$ and $\boldsymbol{\gamma}_{i,j}\in\N^n$.
\end{lem}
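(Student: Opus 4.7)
The natural approach is structural induction on how $g$ is built up according to the four closure rules in Definition \ref{def-delta}. I would argue that the class of polynomials admitting a representation of the form \eqref{eqn-g} is itself closed under the three operations in $\Delta$, and since it contains $\F$, it must contain all of $\Delta_\p(\F)$.

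First I would handle the base case: for any $g = f_i \in \F$, the trivial representation with a single term $a_{i,1} = 1$ and $\boldsymbol{\gamma}_{i,1} = \mathbf{0}$ places $f_i$ in the required form. Next, for the inductive step corresponding to rule (2), if $g = a\cdot h$ with $h = \sum_{i,j} a_{i,j} \partial^{|\boldsymbol{\gamma}_{i,j}|} f_i/\partial \x^{\boldsymbol{\gamma}_{i,j}}$, then multiplying the scalars $a_{i,j}$ by $a$ yields a representation of $g$ of the same shape. For rule (3), if $g = h_1 + h_2$ and each summand already has a representation, concatenating the two linear combinations gives the required representation of $g$. For rule (4), if $g = \partial h/\partial x_k$ with $h = \sum_{i,j} a_{i,j} \partial^{|\boldsymbol{\gamma}_{i,j}|} f_i/\partial \x^{\boldsymbol{\gamma}_{i,j}}$, then by linearity of differentiation
\[
g \;=\; \sum_{i,j} a_{i,j}\,\frac{\partial^{|\boldsymbol{\gamma}_{i,j}|+1} f_i}{\partial x_k\,\partial \x^{\boldsymbol{\gamma}_{i,j}}},
\]
which is again of the form \eqref{eqn-g} after replacing each $\boldsymbol{\gamma}_{i,j}$ with $\boldsymbol{\gamma}_{i,j} + \mathbf{e}_k$.

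The key observation, and the only conceptual point worth stressing, is that the side conditions attached to rules (3) and (4) of Definition \ref{def-delta} (namely $h_1(\p) + h_2(\p) = 0$ and $\partial h/\partial x_i(\p) = 0$) play no role here: they restrict \emph{which} polynomials are admitted into $\Delta_\p(\F)$, but they do not affect the algebraic shape of the representation. Thus the induction goes through without having to verify any vanishing condition. I do not anticipate a real obstacle; the argument is a direct structural induction, and the main thing to check carefully is simply that each of the three closure operations preserves the class of expressions of the form \eqref{eqn-g}, which is immediate from linearity and from the commutativity of scalar multiplication, addition, and partial differentiation.
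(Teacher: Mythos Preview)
Your proposal is correct and is precisely the structural induction that makes the statement true; the paper's own proof consists of the single sentence ``The proof is obvious,'' so you have simply written out what the authors left implicit.
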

\begin{proof} The proof is obvious.
\end{proof}

We illustrate Definition \ref{def-delta} and Lemma \ref{lem-g} by the following example.
\begin{exm}
Let $\F=\{f_1=(x+y)^2+x^3,f_2=x+y+y^3\}$. $\p=(0,0)$ is an isolated zero of $\F=0$. Let $h_1=\frac{\partial f_1}{\partial x}=2\,(x+y)+3\,x^2$, $h_2=\frac{\partial f_1}{\partial y}=2\,(x+y)$, $h_3=h_1-2\ f_2=3\,x^2-2\,y^3$, $h_4=\frac{\partial h_3}{\partial x}=6\,x$, $h_5=\frac{\partial^2 h_3}{\partial y^2}=-12\,y$. It is clear that $h_i\in\Delta_\p(\F), i=1,\ldots,5$ and $h_i$ has the form as (\ref{eqn-g}).
\end{exm}

%
\section{Computing a deflated square system}
 Given a polynomial system with a multiple zero, Newton-type method usually is not used directly on the input system since it converges slowly or even doesn't converge. Thus, deflation techniques are developed to transform the input system into another deflated system, which is regular at some zero whose certain projection is the given multiple zero.

 In this section, given a polynomial system $\F\subset\C[\x]$ with an isolated singular zero $\p\in\C^n$, by employing some differential operations on the input polynomials or on the linear combinations of the related polynomials, we propose a new method to construct a new square system $\F'\subset\C[\x]$, which satisfies that $\p$ is a simple zero of $\F'=\0$. We also prove the existence of $\F'$ and show some properties of it.

 First, let's see a simple example to explain our idea. 
 \begin{exm}\label{exm-1.1}
   Let $\F=\{f_1=x-y+x^2,f_2=x-y+y^2\}$ with a 3-fold isolated zero $\p=(0,0)$. Obviously, $f_1$ and $f_2$ are already regular at $\p$. However, it is easy to find that the terms with degree one of $f_1$ and $f_2$ are linear dependent. Using $f_2-f_1$ to eliminate these terms of degree one, we get the polynomial $h=y^2-x^2$ and two new polynomials $\frac{\partial h}{\partial x}=-2\,x$, $\frac{\partial h}{\partial y}=2\,y$, which are both regular at $\p$. Selecting the two polynomials $f_1$ and $\frac{\partial h}{\partial y}$, we get a new square system $\F'=\{x-y+x^2,2\,y\}$, which has a regular zero $\p=(0,0)$. Moreover, it's a system without perturbation.
 \end{exm}
 Based on the idea in the above simple example, now we show our technique to construct a deflated square system below.

 Assume that we have got the polynomials $g_1,\ldots,g_s$, which are regular at $\p$, from the input polynomials $f_1,\ldots,f_s$ such that $$\rank(\J(g_1,\ldots,g_s)(\p))=s.$$
  Given one more polynomial $f_{s+1}$, we want to compute another polynomial $g_{s+1}$, s.t.
  \begin{equation*}
    \rank(\J(g_1,\ldots,g_s,g_{s+1})(\p))=s+1.
  \end{equation*}
  Using only $g_1,\ldots,g_s$ and $f_{s+1}$, we may not get the suitable $g_{s+1}$ if
  \begin{equation*}
    \dim\V(g_1,\ldots,g_s,f_{s+1})>\dim\V(f_1,\ldots,f_s,f_{s+1}).
  \end{equation*}

  The input polynomials are needed in this case. Thus, we use $\{g_1,\ldots,g_s\}\cup\{f_1,\ldots,f_{s+1}\}$ to compute $g_{s+1}$.  We will show how to compute $g_{s+1}$ in the following lemma.
\begin{lem}\label{lem-abc}
 Let $\F=\{f_1,\ldots,f_{s},f_{s+1},\ldots,f_{s+k}\}\subset \C[x_1,\ldots,x_n] (k\ge 1)$ and $\p \in \C^n$, s.t. $\F(\p)=\0$ and $\rank(\J(\F)(\p))=s$. Assume $\dim\V(\F)\le n-s-1$ and $\deg(\F)=m(m>1)$. Then we can get a polynomial system $\F'=\{f'_1,\ldots,f'_s,f'_{s+1}\}$, which satisfies:
\begin{enumerate}
 \item $\rank(\J(\F')(\p))=s+1$, and  $f_j'\in\Delta_\p(\F) (1\le j\le s+1)$;

 \item $\deg(\F')\le m$.
 \end{enumerate}
\end{lem}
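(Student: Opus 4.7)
The plan is to keep $f'_i=f_i$ for $i\le s$ and construct $f'_{s+1}$ by first reducing and then differentiating the remaining inputs. I would reorder $\F$ so that $\nabla f_1(\p),\ldots,\nabla f_s(\p)$ are linearly independent (possible since $\rank\J(\F)(\p)=s$). For each $j>s$ there exist unique $c_{j,i}\in\C$ with $\nabla f_j(\p)=\sum_{i=1}^s c_{j,i}\nabla f_i(\p)$, so I set
$$h_j=f_j-\sum_{i=1}^s c_{j,i}f_i.$$
Then $h_j(\p)=0$ and $\nabla h_j(\p)=\0$, so $h_j$ is singular at $\p$; also $h_j\in\Delta_\p(\F)$ by items $(2)$ and $(3)$ of Definition \ref{def-delta}, and $\deg(h_j)\le m$.

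I would next show that at least one $h_j$ is not identically zero: if all $h_j\equiv 0$, then $\V(\F)=\V(f_1,\ldots,f_s)$, which has local dimension $n-s$ at $\p$ by the implicit function theorem (since $\rank\J(f_1,\ldots,f_s)(\p)=s$), contradicting $\dim\V(\F)\le n-s-1$. After relabeling I assume $h_{s+1}\not\equiv 0$. Applying Lemma \ref{lem-partial} to $h_{s+1}$, pick $\boldsymbol{\gamma}\in\N^n$ with $g:=\mathbf{d}^{\boldsymbol{\gamma}}_\x(h_{s+1})$ regular at $\p$. Since $h_{s+1}$ has order of vanishing at least $2$ at $\p$, the $|\boldsymbol{\gamma}|$ successive single-variable differentiations producing $g$ can be ordered so that every intermediate polynomial still vanishes at $\p$, keeping each intermediate (and hence $g$) inside $\Delta_\p(\F)$ by item $(4)$ of Definition \ref{def-delta}. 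Clearly $\deg(g)\le\deg(h_{s+1})\le m$, and any further $\C$-linear correction by the $f_i$'s — which remains in $\Delta_\p(\F)$ — preserves this degree bound, yielding item $(2)$.

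The main technical obstacle is item $(1)$: ensuring $\nabla g(\p)\notin\mathrm{span}\{\nabla f_1(\p),\ldots,\nabla f_s(\p)\}$, so that the Jacobian of $\F'$ at $\p$ actually has rank $s+1$. The hypothesis $\dim\V(\F)\le n-s-1$ is the only lever available. Using the implicit function theorem I would parameterize $\V(f_1,\ldots,f_s)$ locally at $\p$ as $x_i=p_i+\phi_i(x_{s+1},\ldots,x_n)$ for $i\le s$, with each $\phi_i$ of order $\ge 2$; the dimension hypothesis then forces the restriction $\bar h_{s+1}:=h_{s+1}(p_1+\phi_1,\ldots,p_s+\phi_s,x_{s+1},\ldots,x_n)$ to be a nonzero power series in $x_{s+1}-p_{s+1},\ldots,x_n-p_n$. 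Equivalently, the tangent cone of $\V(\F)$ at $\p$ is strictly smaller than the Zariski tangent space of $\V(f_1,\ldots,f_s)$, so after a suitable $\C$-linear adjustment the leading homogeneous Taylor part of some $h_j$ at $\p$ contains a monomial purely in $x_{s+1}-p_{s+1},\ldots,x_n-p_n$. Differentiating only in those transverse directions, with $|\boldsymbol{\gamma}|$ equal to one less than the order of vanishing of this $h_j$, then produces $g\in\Delta_\p(\F)$ regular at $\p$ whose gradient has a nonzero component outside $\mathrm{span}\{\nabla f_i(\p)\}$. The crux of the proof is verifying that such a transverse pure monomial can always be achieved — possibly by iterating the $\C$-linear cleanup by the $f_i$'s and taking further partial derivatives, all of which remain in $\Delta_\p(\F)$.
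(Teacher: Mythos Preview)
Your overall strategy matches the paper's: use the dimension hypothesis to locate a ``transverse'' monomial in the Taylor expansion of some reduced polynomial and then differentiate in that direction. The gap is exactly where you flag it, and it is real. Your claim that the implicit-function maps $\phi_i$ have order $\ge 2$ is false in general: writing $\J(f_1,\ldots,f_s)(\p)=[A\mid B]$ with $A$ invertible gives $D\phi(0)=-A^{-1}B$, typically nonzero. Without that claim, the conclusion that some $h_j$ has a leading Taylor form containing a monomial purely in $x_{s+1}-p_{s+1},\ldots,x_n-p_n$ does not follow. Concretely, take $n=2$, $s=1$, $\p=\0$, $f_1=x_1+x_2$, $f_2=f_1+x_1^2$. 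Then $h_2=x_1^2$ has no $x_2$-monomial whatsoever, so differentiating in your ``transverse'' coordinate direction $\partial/\partial x_2$ yields $0$; yet $\dim\V(\F)=0$ and $\bar h_2=x_2^2\ne 0$. Further $\C$-linear cleanup by $f_1$ cannot repair this, since $h_2$ is already fully reduced.

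The paper's remedy is precisely the missing ingredient: a linear coordinate change $y_i=\sum_k a_{ik}x_k$ for $i\le s$ and $y_i=x_i$ for $i>s$, where $a_{ik}=\partial f_i(\p)/\partial x_k$. After this change $\nabla F_i(\0)=e_i$ for $i\le s$, so the local graph of $\V(F_1,\ldots,F_s)$ over $\{y_1=\cdots=y_s=0\}$ genuinely has order $\ge 2$, and your argument then goes through verbatim in the $y$-variables: some $F_l$ contains a pure monomial in $y_{s+1},\ldots,y_n$ of minimal degree $d$, and $d-1$ successive $y$-partials (each intermediate vanishing at $\0$ because $d$ is minimal) produce the desired $f'_{s+1}$. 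Since $L$ is linear with constant coefficients, each $\partial/\partial y_j$ pulls back to a fixed $\C$-linear combination of the $\partial/\partial x_i$, so all intermediates remain in $\Delta_\p(\F)$. Equivalently, without changing coordinates you must take directional derivatives along vectors in $\ker\J(f_1,\ldots,f_s)(\p)$ rather than along the bare coordinate axes $x_{s+1},\ldots,x_n$; in the example above, $(\partial/\partial x_1-\partial/\partial x_2)h_2=2x_1$ does the job.
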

\begin{proof}
 Without loss of generality, we assume that $\p$ is the origin and
 \begin{equation}
   \rank(\J(f_{1},\ldots,f_{s})(\p))=s.
 \end{equation}
 In the following, we consider the case of $s>0$, since if $s=0$, we can use the operators $\partial$ on $f_i(1\le i\le s+k)$ to get some polynomials, which are regular at $\p$.

  To construct a polynomial system $\F'$, s.t. $\rank(\J(\F')(\p))=s+1$, we consider the rest polynomials $\{f_{s+1},\ldots,f_{s+k}\}$.
 Our proof is constructive.

First, $f_i(i=1,\ldots,s)$ has the form : $$f_i=\sum_{k=1}^{n}a_{ik}x_{k}+ \mathrm{T}_i,$$ where $\mathrm{T}_i\in\C[\x]$ and $\deg(\mathrm{T}_i)=0$ or $\deg(\mathrm{T}_i)\geq 2$.
It's easy to know that the row vector $\mathbf{a}_i=(a_{i1},\ldots,a_{in})(1\le i\le s)$ of the Jacobian matrix of $(f_1,\ldots,f_s)$ at $\p$ are linear independent since (3) holds.


Therefore, we can consider the following linear coordinate transformation $L$:
\[
L:\left\{\begin{array}{l}
y_i=\sum\limits_{k=1}^{n}a_{ik}x_k,1\leq i\leq s\\
y_i=x_i,\ i=s+1,\ldots,n.
\end{array}\right. \ \ 
\]
 With a realignment of the sequence of the variables $\{x_1,\ldots,x_n\}$, we can always have the first $s$ columns of the coefficient matrix of $L$ being linear independent. Then $L$ is invertible. Denote the inverse of $L$ as $L^{-1}$. Let $\p'=L(\p)$ and $F_i=L^{-1}(f_i)\in\mathbb{C}[{y}_1,\ldots,{y}_n]$. We have: 

\begin{eqnarray}\label{a}
\left\{\aligned
 F_i&=y_i+L^{-1}(\mathrm{T}_i),\  i=1,\ldots,s,\\
 F_{s+i}&=\sum\limits_{j=1}^{s}b_{i,j}y_j+L^{-1}(\mathrm{T}_{s+i}),i=1,\ldots,k.
\endaligned\right.
\end{eqnarray}
%

 Since $\dim\mathbb{V}(\F)\le n-s-1$ and $L^{-1}$ is invertible, it is obvious that $$\dim\V(F_1,\ldots,F_{s+k})\le n-s-1.$$
 Therefore, noticing that the terms with degree one of all $F_i(i=1,\ldots,s+k)$ in (\ref{a}) contain only $s$ variables, there must be at least one of $\{L^{-1}(\mathrm{T}_i),i=1,\ldots,s+k\}$ containing at least one term, which has the form of $y_{s+1}^{d_{s+1}}y_{s+2}^{d_{s+2}}\cdots y_{n}^{d_{n}}$, such that $\sum\limits_{j=s+1}^{n}d_j>1$.

 It's easy to prove the claim. Suppose all $L^{-1}(\mathrm{T}_i)(1\leq i\leq s+k)$ contain no terms of the form of $y_{s+1}^{d_{s+1}}y_{s+2}^{d_{s+2}}\cdots y_{n}^{d_{n}}$.
 Then, all the terms of $F_i(1\leq i\leq s+k)$ have the form of $y_1^{d_1}\cdots y_s^{d_s}y_{s+1}^{d_{s+1}}\cdots y_{n}^{d_{n}}$, $\sum\limits_{j=1}^{s}d_j>0$.
 In this case,  the system $\{F_1,\ldots,F_{s+k}\}$ vanishes on $\{y_1=0,\ldots,y_s=0\}$. Thus, we can verify easily that $\dim\V(F_1,\ldots,F_{s+k})= n-s$, which contradicts with $\dim\V(F_1,\ldots,F_{s+k})\le n-s-1$. Thus, the claim is true.

 Without loss of generality, we suppose that $L^{-1}(\mathrm{T}_l)(l\in\{1,\ldots,s+k\})$ has the term with the form of $y_{s+1}^{d_{s+1}}y_{s+2}^{d_{s+2}}\cdots y_{n}^{d_{n}}$ and take the variable $y_{s+1}$ for example, i.e. $d_{s+1}\neq 0$. Further, we ask for the term with a lowest degree among all this
kind of terms and denote the lowest degree as ${d}$.
 Then, we have:
 \begin{eqnarray}
 \aligned
   F'_{s+1}&=\frac{\partial^{{d}-1} F_{l}}{\partial y_{s+1}^{d_{s+1}-1}y_{s+2}^{d_{s+2}}\cdots y_{n}^{d_{n}}}=\sum\limits_{i=1}^{n}\gamma_i y_i+\mathrm{T}'_{l},\ {d}=\sum\limits_{j=s+1}^{n}d_j.
   \endaligned
 \end{eqnarray}
 It is easy to see that $\gamma_{s+1}\neq0,\ \deg(F_{s+1}')<\deg(F_l)$.

 Thus, we have a new system $\{F_1,\ldots,F_s,F'_{s+1}\}$. It's easy to check that $$\rank(\J(F_1,\ldots,F_s,F'_{s+1})(\p'))=s+1.$$
 Finally, after doing the transformation $L$ on $F_i(1\leq i\leq s)$ and $F'_{s+1}$, we have the new system $\F'=\{f'_1,\ldots,f'_{s+1}\}$, where $$f'_i=L(F_i)=f_i(i=1,\ldots,s),f'_{s+1}=L(F'_{s+1})\ \text{with}\ \rank(\J(f'_1,\ldots,f'_s,f'_{s+1})(\p))=s+1.$$
 By the definition of $\Delta_\p(\F)$ (see Definition \ref{def-delta}), we can find that $f_i'\in\Delta_\p(\F)(1\le i\le s+1)$. Therefore, we finished the first part of the proof.

From Lemma \ref{lem-g} and (5), it is easy to know that the maximal degree of $f'_i(i=1,\ldots,s+1)$ is no larger than $m$. That is, $\deg(\F')\le m$. Thus, we complete the proof.
\end{proof}

Now, we consider constructing a square system, which is regular at an isolated singular zero of the input system.

\begin{thm}\label{thm-theory}
Let $\F=\{f_1,\ldots,f_N\}\subset \C[\x](N\ge n)$ be a polynomial system. $\p\in\C^n$ an isolated singular zero of $\F=\0$ and $\deg(\F)=m$. Then there exists a square polynomial system $\F'=\{f'_1,\ldots,f'_n\}\subset\Delta_\p(\F)$, s.t.
\begin{enumerate}
\item $\p$ is an isolated regular zero of $\F'=\0$;
\item $\deg(\F')\le m$.
\end{enumerate}
\end{thm}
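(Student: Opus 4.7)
The proof will proceed by induction on the Jacobian rank at $\p$, using Lemma \ref{lem-abc} as the engine that raises the rank one unit at a time. Set $s_0 := \rank(\J(\F)(\p)) < n$ (the system is singular at $\p$ by hypothesis). I will construct, for each $s = s_0, s_0+1, \ldots, n$, a collection $g_1, \ldots, g_s \in \Delta_\p(\F)$ of degree at most $m$ with $\rank(\J(g_1, \ldots, g_s)(\p)) = s$; the terminal step $s = n$ then gives the desired $\F'$.

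For the base of the induction, if $s_0 \geq 1$ I simply select any $s_0$ polynomials of $\F$ whose Jacobian rows at $\p$ are linearly independent. If $s_0 = 0$, then every $f_i$ is singular at $\p$, and Lemma \ref{lem-partial} applied to any nonconstant $f_i$ supplies an iterated partial derivative that is regular at $\p$; this element of $\Delta_\p(\F)$ serves as the starting $g_1$. For the inductive step, suppose $g_1, \ldots, g_s$ are in hand with $s < n$, and form the enlarged system $\F^{(s)} := \{g_1, \ldots, g_s, f_1, \ldots, f_N\}$. If $\rank(\J(\F^{(s)})(\p)) > s$, then some $f_j$ contributes an independent Jacobian row and I take $g_{s+1} := f_j$ directly. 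Otherwise the rank is exactly $s$, and invoking Lemma \ref{lem-abc} on $\F^{(s)}$ yields $g_{s+1} \in \Delta_\p(\F^{(s)}) \subseteq \Delta_\p(\F)$ of degree at most $m$ with $\rank(\J(g_1, \ldots, g_{s+1})(\p)) = s+1$.

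After $n - s_0$ iterations I obtain $\F' := \{g_1, \ldots, g_n\} \subset \Delta_\p(\F)$ with $\rank(\J(\F')(\p)) = n$ and $\deg(\F') \leq m$. Every element of $\Delta_\p(\F)$ vanishes at $\p$, as an immediate consequence of the closure conditions $(1)$--$(4)$ in Definition \ref{def-delta}, so $\F'(\p) = \0$; combined with the full Jacobian rank at $\p$, the inverse function theorem makes $\p$ an isolated regular zero of $\F'$. The main subtlety is verifying the dimension hypothesis $\dim \V(\F^{(s)}) \leq n - s - 1$ required by Lemma \ref{lem-abc} at each stage: since $\p$ is isolated in $\V(\F)$ and $\F \subseteq \F^{(s)}$, the germ of $\V(\F^{(s)})$ at $\p$ is zero-dimensional, and because the combinatorial Taylor-expansion argument inside Lemma \ref{lem-abc} depends only on the local data at $\p$, the lemma applies in the required localized sense even when $\V(\F)$ has positive-dimensional components away from $\p$.
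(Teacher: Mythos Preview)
Your argument is correct and follows essentially the same route as the paper: start from a subsystem of $\F$ of maximal Jacobian rank $r$ (treating $r=0$ via Lemma~\ref{lem-partial}), then repeatedly apply Lemma~\ref{lem-abc} to the enlarged system $\{g_1,\ldots,g_s\}\cup\F$ to raise the rank one step at a time until it reaches $n$. The paper's proof is organized identically, so there is no substantive difference in strategy.

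One point worth noting: you explicitly flag the dimension hypothesis $\dim\V(\F^{(s)})\le n-s-1$ of Lemma~\ref{lem-abc} and argue that only the local behavior at $\p$ matters. The paper glosses over this entirely, simply invoking the lemma without comment. Your observation is the right one, and can be made precise by inspecting the proof of Lemma~\ref{lem-abc}: the contradiction there is that if no $L^{-1}(\mathrm{T}_i)$ has a monomial purely in $y_{s+1},\ldots,y_n$, then the whole affine subspace $\{y_1=\cdots=y_s=0\}$ lies in $\V(F_1,\ldots,F_{s+k})$. Since this subspace passes through $\p$, its presence already contradicts the isolation of $\p$ in $\V(\F)\supseteq\V(\F^{(s)})$, regardless of any positive-dimensional components elsewhere. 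So your localization remark is justified, and in fact patches a point the paper leaves implicit.
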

\begin{proof}

 Without loss of generality, assume that $\p$ is the origin. In the following,
we will construct a square system by the polynomials in $\Delta_\p(\F)$.

First, we can choose a system $\F_{0}$ from $\F$, denoted as $\F_0= \{f_1,\ldots,f_r\}$, whose Jacobian matrix at $\p$ has a maximal rank, s.t.  $$\rank(\J(f_{1},\ldots,f_{r})(\p))=\rank(\J(\F)(\p))=r,\ 0\le r\le n.$$ If $r=n$, we finish the proof.
Noticing that when $r=0$, we need only considering at least one of the polynomials in $f_1,\ldots,f_N$ and can always get at least one polynomial, which is regular at $\p$ by Lemma \ref{lem-partial}. Thus, in the following, we consider the case of $1\le r<n$.

First, considering the system $\{f_1,\ldots,f_r,f_{r+1},\ldots,f_N\}$, by Lemma \ref{lem-abc}, we can get a system $$\F_1=\{f_1^{(1)},\ldots,f_r^{(1)},f_{r+1}^{(1)}\},$$ s.t. $$\F_1(\p)=\0 \ \text{and} \ \rank(\J(f_1^{(1)},\ldots,f_r^{(1)},f_{r+1}^{(1)})(\p))=r+1.$$
Using the technique in Lemma \ref{lem-abc}, when considering the system $\F \cup\{f_1^{(1)},\ldots,f_r^{(1)},f_{r+1}^{(1)}\}$, we can get a system $$\F_2=\{f_1^{(2)},\ldots,f_{r+1}^{(2)},f_{r+2}^{(2)}\},$$ s.t. $$\F_2(\p)=\0 \ \text{and} \ \rank(\J(f_1^{(2)},\ldots,f_{r+1}^{(2)},f_{r+2}^{(2)})(\p))=r+2.$$
Repeat this process $n-r$ times and finally, we get a square system $$\F_{n-r}=\{f_1^{(n-r)},f_{2}^{(n-r)},\ldots,f_{n}^{(n-r)}\},$$ s.t. $$\F_{n-r}(\p)=\0 \ \text{and} \ \rank(\J(f_1^{(n-r)},f_{2}^{(n-r)},\ldots,f_{n}^{(n-r)})(\p))=n.$$
 Thus, our final square system $$\F'=\{f'_1=f_1^{(n-r)},f'_2=f_{2}^{(n-r)},\ldots,f'_n=f_{n}^{(n-r)}\}.$$
By Lemma \ref{lem-abc}, it is obvious that the maximal degree of $f'_i(1\leq i\leq n)$ is no larger than $m$. That is, $\deg(\F')\le m$. 
%
%
%
%
\end{proof}
\begin{rem}

  1. In the above construction process, we repeat $n-r$ times to get the deflated system $\F'$. If considering all the variables simultaneously, we get more than one eligible polynomial each time in (5). Thus, the number of times in actual computation is less than $n-r$.

  2. In the beginning of our construction, we also can compute all the related polynomials of all the input polynomials, which are regular at $\p$. Then, we choose a system from these polynomials, whose Jacobian matrix at $\p$ has a maximal rank. That's to say that we make $r$ as big as possible to reduce our repeating steps.
\end{rem}
%
Theorem \ref{thm-theory} tells us that given a polynomial system $\F$ with an isolated singular zero $\p$, we can construct a new square system $\F'$, which is regular at $\p$ and moreover, the degree of the polynomials in $\F'$ does not increase. We give an example to illustrate our method.
\begin{exm}\label{exm-DZ2}
 (DZ2 \cite{dayton1}) Let $\F=\{f_1=x_1^4,f_2=x_1^2x_2+x_2^4,f_3=x_3+x_3^2-7\,x_1^3-8\,x_1^2\}$, which has a 16-fold zero $\p=(0,0,-1)$. The maximal degree of $f_1,f_2,f_3$ is 4. First, by the Taylor expansions of $f_1,f_2,f_3$ at $\p$, we have:
  \begin{eqnarray*}
  \aligned
    f_1 &= x_1^4, \\
    f_2 &= x_1^2x_2+x_2^4, \\
    f_3 &= -(x_3+1)-8\,x_1^2+(x_3+1)^2-7\,x_1^3.
    \endaligned
  \end{eqnarray*}
  It's easy to find that only $f_3$ is regular at $\p$. Since $s=\rank(\J(\F)(\p))=1$ and $\dim\V(f_3,f_2)=1$, we consider the system $\{f_3,f_2\}$ directly. By Lemma \ref{lem-abc}, we have a system $$\{f_1^{(1)}=f_3,f_2^{(1)}=\mathbf{d}^{(2,0,0)}_{\x}(f_2)=x_2\},$$ which satisfies $\rank(\J(f_1^{(1)},f_2^{(1)})(\p))=2$.

  Next, we continue to consider the system $\{f_1^{(1)},f_2^{(1)}\}\cup\F$. Since $\dim\V(f_1^{(1)},f_2^{(1)},\F)=0$, by Lemma \ref{lem-abc}, we have a system $$\{f_1^{(2)}=f_3,f_2^{(2)}=x_2,f_3^{(2)}=\mathbf{d}^{(3,0,0)}_{\x}(f_1)=4\,x_1\},$$ which satisfies $\rank(\J(f_1^{(2)},f_2^{(2)},f_3^{(2)})(\p))=3$.

  Thus, we acquire the final square system $\F'=\{f_3,x_2,4\,x_1\}$. It's easy to check that $\p$ is a simple zero of $\F'=\0$ and the degree of every polynomial in $\F'$ is no more than 4.
\end{exm}
In this example, we repeat $n-s=2$ times to acquire the final square system $\F'$. In fact, as what we say in Remark 2 of Theorem \ref{thm-theory}, computing twice is not necessary. Noticing that when computing $f_2^{(1)}=\mathbf{d}^{(2,0,0)}_{\x}(f_2)=x_2$, we also can get $\mathbf{d}^{(1,1,0)}_{\x}(f_2)=2\,x_1$. They are both regular at $\p$. It is easy to check that $$\rank(\J(f_1^{(1)},f_2^{(1)},\mathbf{d}^{(1,1,0)}_{\x}(f_2)=2\,x_1)(\p))=3.$$
Thus, we obtain another square system $\F'=\{f_3,x_2,2\,x_1\}$.

\section{An effective version of our deflation method}

In the section, by introducing some new variables to represent the coefficients of the linear combinations, we give an effective version of our deflation method.
The deflated system produced by our deflation method has a simple zero, whose partial projection corresponds to the isolated singular zero of the input system.  Furthermore, we also analyze the influences of the given tolerances $\theta$ and $\varepsilon$ to our method and show how to adjust their values to get a deflated system as exact as possible.


\subsection{Parametric deflation system}
Given a polynomial system $\F$ with an isolated singular zero $\p$, by employing some differential operations on the input polynomials directly or on the linear combinations of the related polynomials, we give a method to construct a new polynomial system $\F'$ in Section 3, which satisfies that $\p$ is a simple zero of $\F'=\0$.

However, in practice, we can just get an approximate zero $\tilde{\p}$. As what we say in Example 1, the inexact value of $\tilde{\p}$ usually brings perturbations in the coefficients when doing the Taylor series expansions of the input polynomials at $\tilde{\p}$. Therefore, we can not do exact computations when adding two or more polynomials together. The inexact computations would produce a perturbed system of $\F'$, which will lead to a bad final deflation result. We show an example to illustrate this case.
\begin{exm}\label{exm-1.2}
  Continue with Example \ref{exm-1.1}. Given an approximate zero $$\tilde{\p}=(0.0006721,0.0008381).$$
  Using the method in Theorem \ref{thm-theory}, we have $\tilde{h}=f_2+\tilde{\alpha} f_1$. By solving a Least Square problem, we can get $\tilde{\alpha}=-0.9984909264232$.
  Finally, we get an inexact system $$\widetilde{\F}'=\{x-y+x^2,2\,y-0.0015090735767\}.$$
  Obviously, we can not get a good result by the system $\widetilde{\F}'$.
\end{exm}
With a simple analysis, we can find that we couldn't get an exact coefficient $\alpha$ of the linear combination of the polynomials with an approximate zero.


In the following, by introducing some new variables to represent the coefficients of the linear combinations, we give an effective version of our deflation method. Finally, the effective version of our deflation method will usually produce an exact deflated system, which has a simple zero, whose partial projection corresponds to the isolated singular zero of the input system. Furthermore, we also provide the size bound of our method. To our knowledge, it is the first time that considering the deflation of the polynomial system from the perspective of linear combination.

 Similarly, before giving our theoretical results, we also show our main idea with a simple example first.
 \begin{exm}
 Still consider Example \ref{exm-1.1}. Once given an approximate zero of the input system: $\tilde{\p}=(0.0006721,0.0008381)$, by Example \ref{exm-1.2}, we know the coefficient $\tilde{\alpha}$ is inexact. Now we introduce a new variable $\alpha_1$. Let $h=f_2+\alpha_1 f_1$ and compute $$\frac{\partial h}{\partial x}=1+\alpha_1(2\,x+1),\frac{\partial h}{\partial y}=2\,y-1-\alpha_1.$$
 Similar as in Example \ref{exm-1.2}, we have $\tilde{\alphahei}_1=-0.9984909264232$. Given a tolerance $\varepsilon=0.05$, we have $$\rank(\J(f_1,\frac{\partial h}{\partial x},\frac{\partial h}{\partial y})(\tilde{\p},\tilde{\alpha}_1),\varepsilon)=2<3.$$
 Do once again this process and introduce two new variables $\alpha_2,\alpha_3$. Let $$g=\frac{\partial h}{\partial y}+\alpha_2 f_1+\alpha_3\frac{\partial h}{\partial x}$$ and compute
 $$\frac{\partial g}{\partial x}=2\,\alpha_1 \alpha_3+\alpha_2(2\,x+1),\frac{\partial g}{\partial y}=2-\alpha_2,\frac{\partial g}{\partial \alpha_1}=\alpha_3(2\,x+1)-1.$$
 By solving another Least Square problem, we get the approximate values: $$\tilde{\alpha}_2=1.998595\\5412653,\tilde{\alpha}_3=1.0014510032456.$$ Then, we have $$\rank(\J(f_1,\frac{\partial f}{\partial x},\frac{\partial g}{\partial x},\frac{\partial g}{\partial y},\frac{\partial g}{\partial \alpha_1})(\tilde{\p},\tilde{\alpha}_1,\tilde{\alpha}_2,\tilde{\alpha}_3),\varepsilon)=5.$$
 Thus, we get a polynomial system $$\widetilde{\F}'(\x,\alphahei)=\{f_1,\frac{\partial h}{\partial x},\frac{\partial g}{\partial x},\frac{\partial g}{\partial y},\frac{\partial g}{\partial \alpha_1}\},$$ whose Jacobian matrix at $(\tilde{\p},\tilde{\alpha}_1,\tilde{\alpha}_2,\tilde{\alpha}_3)$ has a full rank under the tolerance $\varepsilon$. 
 In fact, we can find that $(0,0,-1,2,1)$ is a simple zero of $\widetilde{\F}'(\x,\alphahei)=\0$ and the partial projection $(0,0)$ of $(0,0,-1,2,1)$ corresponds to the isolated singular zero $\p$ of the input system $\F$.
 \end{exm}
Given a polynomial system with an isolated zero, we have the following lemma.
\begin{lem}\cite{lvz}\label{lem_lvz}
  Let $\F=\{f_1,\ldots,f_n\}\subset\C[\x]$ be a polynomial system. $\p\in\C^n$ is an isolated singular zero of $\F=\0$. $\llambda=(\lambda_1,\ldots,\lambda_n)\in\C^n$ is a nonzero row vector, which satisfies $\J(\F)(\p)\llambda^{\T}=\0$. For the new system $$\G=\{\lambda_1\frac{\partial f_1}{\partial x_1}+\ldots+\lambda_n\frac{\partial f_1}{\partial x_n},\ \ldots,\ \lambda_1\frac{\partial f_n}{\partial x_1}+\ldots+\lambda_n\frac{\partial f_n}{\partial x_n}\},$$ we have the multiplicity of $\p$ in $\{\F,\G\}=\0$ is lower than the multiplicity of $\p$ in $\F=\0$.
\end{lem}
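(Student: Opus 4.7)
The plan is to reformulate both multiplicities as dimensions of Macaulay dual spaces. Let $D := D_\p(\F)$ and $D' := D_\p(\F\cup\G)$ denote the spaces of constant-coefficient differential functionals at $\p$ annihilating the ideals $(\F)$ and $(\F,\G)$, respectively, so that $\dim_\C D$ and $\dim_\C D'$ are the two multiplicities in question. Since $(\F)\subseteq(\F,\G)$, we already have $D'\subseteq D$, and the lemma reduces to producing some $\widetilde L\in D$ with $\widetilde L(G_j)(\p)\neq 0$ for some $j$: such an $\widetilde L$ still annihilates $(\F)$ at $\p$ but fails on $G_j\in(\F,\G)$, so it witnesses $D'\subsetneq D$. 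The hypothesis $\J(\F)(\p)\llambda^{\T}=\0$ enters here only to ensure $G_j(\p)=0$ for all $j$, so that $\p$ is also an isolated zero of $\F\cup\G$ and the right-hand multiplicity is well defined.

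The key tool is the adjoint of $D_\llambda := \sum_i\lambda_i\,\partial_{x_i}$ with respect to the evaluation pairing $\langle L,f\rangle := L(f)(\p)$. A direct computation in the basis $\mathbf{d}^{\boldsymbol{\gamma}}_\x$ of Section 2 gives $D_\llambda^*\,\mathbf{d}^{\boldsymbol{\gamma}}_\x = \sum_i\lambda_i(\gamma_i+1)\,\mathbf{d}^{\boldsymbol{\gamma}+\mathbf{e}_i}_\x$, and after a routine factorial rescaling (passing to the divided-power pairing $\mathbf{d}^{\boldsymbol{\gamma}}_\x\leftrightarrow(\x-\p)^{\boldsymbol{\gamma}}/\boldsymbol{\gamma}!$) this is exactly multiplication by the linear form $\sum_i\lambda_i(x_i-p_i)$ under Macaulay duality. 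Because $\llambda\neq\0$, multiplication by this nonzero linear form is injective on polynomials, so $D_\llambda^*$ is injective on the space of functionals and raises the order of each nonzero functional by exactly one. Now pick $L\in D$ whose order equals the depth $d$ of $D$; since $\p$ is singular, $\dim D\geq 2$ forces $d\geq 1$, and $D_\llambda^*L$ then has order exactly $d+1$, which strictly exceeds the depth. Consequently $D_\llambda^*L\notin D$.

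It remains to convert this abstract failure into an explicit witness. Let $\sigma_i$ denote the shift operator adjoint to multiplication by $(x_i-p_i)$; the dual space $D$ is closed under every $\sigma_i$, and a direct commutator calculation in the basis $\mathbf{d}^{\boldsymbol{\gamma}}_\x$ yields $[\sigma_i,D_\llambda^*]=\lambda_i\cdot\mathrm{Id}$, which extends by induction to $\sigma^{\boldsymbol{\alpha}}D_\llambda^* = D_\llambda^*\sigma^{\boldsymbol{\alpha}} + \sum_i\alpha_i\lambda_i\,\sigma^{\boldsymbol{\alpha}-\mathbf{e}_i}$. Since $D_\llambda^*L\notin D$, there exist $\boldsymbol{\alpha}\in\N^n$ and $j$ with $(\sigma^{\boldsymbol{\alpha}}D_\llambda^*L)(f_j)(\p)\neq 0$; applying the commutator identity and using $(\sigma^{\boldsymbol{\alpha}-\mathbf{e}_i}L)(f_j)(\p)=0$ (true since $L\in D$ and $D$ is $\sigma$-stable) simplifies this to $(\sigma^{\boldsymbol{\alpha}}L)(G_j)(\p) = (D_\llambda^*\sigma^{\boldsymbol{\alpha}}L)(f_j)(\p) \neq 0$, and then $\widetilde L := \sigma^{\boldsymbol{\alpha}}L\in D\setminus D'$ as required. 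I expect the main technical hurdle to be keeping the factorial normalization straight so that both the identification of $D_\llambda^*$ with multiplication by a linear form and the Heisenberg-type commutator with the $\sigma_i$ come out correctly in the paper's conventions; once these two identities are in place, the rest is a clean dimension count.
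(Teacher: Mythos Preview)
The paper does not supply its own proof of this lemma; it is quoted verbatim from \cite{lvz} and used as a black box in the proof of Lemma~\ref{lem-s}. Your dual-space argument is correct and is essentially the argument underlying the cited reference and the related work \cite{dayton1}: one identifies the multiplicity with $\dim_\C D$, observes that post-composition of a functional with $D_\llambda=\sum_i\lambda_i\partial_{x_i}$ raises order by exactly one (hence carries a depth-maximizing $L\in D$ outside $D$), and then extracts a concrete witness in $D\setminus D'$ using the $\sigma$-closure of $D$. The commutator identity $[\sigma_i,D_\llambda^*]=\lambda_i\,\mathrm{Id}$ you invoke follows immediately from the Leibniz rule $D_\llambda\bigl((x_i-p_i)f\bigr)=\lambda_i f+(x_i-p_i)D_\llambda f$, and the factorial normalization concern you flag is harmless: both the injectivity of $D_\llambda^*$ and the commutator hold in either convention, so nothing in the argument depends on getting the constants exactly right.
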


\begin{rem} In Remark 2.1 of \cite{hauensten}, the authors mentioned that deflation could also be constructed using the left null space. That is, we can replace $\G$ by the following system
\begin{equation}\label{eq-G}
\G'=\{\lambda_1\frac{\partial f_1}{\partial x_1}+\ldots+\lambda_n\frac{\partial f_n}{\partial x_1},\ \ldots,\ \lambda_1\frac{\partial f_1}{\partial x_n}+\ldots+\lambda_n\frac{\partial f_n}{\partial x_n}\},
\end{equation}
where $\llambda\,\J(\F)(\p)=\0$. Furthermore, we have the following lemma.
\end{rem}

\begin{lem}\label{lem-s}
  Let $\F=\{f_1,\ldots,f_n\}\subset\C[\x]$ be a polynomial system. $\p\in\C^n$ be an isolated singular zero of $\F=\0$. Assume $\rank(\J(f_1,\ldots,f_s)(\p))=\rank(\J(\F)(\p))=s$. Consider the augmented system $$\G=\{f_1,\ldots,f_n,h_1,\ldots,h_n\}\subset\C[\x,\alphahei],$$ where $$h_j=\alpha_1\frac{\partial f_1}{\partial x_j}+\cdots+\alpha_s\frac{\partial f_s}{\partial x_j}+\frac{\partial f_{s+1}}{\partial x_j},\ j=1,\ldots,n.$$
  Then, we have:
  \begin{enumerate}
    \item there exists a unique $\hat{\alphahei}\in\C^s$ such that the system $\G$ has an isolated zero at $(\p,\hat{\alphahei})$.
    \item the multiplicity of  $\G$ at $(\p,\hat{\alphahei})$ is lower than that of $\F$ at $\p$.
  \end{enumerate}

\end{lem}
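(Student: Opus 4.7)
The plan is to dispatch part (1) by pure linear algebra from the rank hypothesis, and to attack part (2) by an implicit-function-theorem reduction inside the completed local ring followed by a complete-intersection/socle argument that rules out the only obstruction to the multiplicity drop. The organising idea for (2) is that $\llambda:=(\hat\alpha_1,\ldots,\hat\alpha_s,1,0,\ldots,0)\in\C^n$ will be a nonzero left null vector of $\J(\F)(\p)$, so the remark following Lemma \ref{lem_lvz} delivers the multiplicity drop for the specialised system $\F\cup\{h_j(\x,\hat\alphahei)\}_{j=1}^{n}$, and the real work is to propagate this drop back to the augmented system $\G$ in which $\alphahei$ is kept as a free variable.

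For (1), evaluating $h_j$ at $\x=\p$ reduces $h_j(\p,\alphahei)=0$ to the affine linear system
$$\sum_{i=1}^{s}\alpha_i\,\frac{\partial f_i}{\partial x_j}(\p)\;=\;-\frac{\partial f_{s+1}}{\partial x_j}(\p),\qquad j=1,\ldots,n,$$
whose coefficient matrix is the transpose of $\J(f_1,\ldots,f_s)(\p)$. By the first rank hypothesis this matrix has rank $s$, so there is at most one solution; by the second rank hypothesis $\nabla f_{s+1}(\p)$ lies in the row span of $\nabla f_1(\p),\ldots,\nabla f_s(\p)$, which is exactly the compatibility condition, so a unique $\hat\alphahei$ exists. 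For the isolation claim, any zero $(\x,\alphahei)$ of $\G$ close to $(\p,\hat\alphahei)$ satisfies $\F(\x)=\mathbf{0}$, hence $\x=\p$ by isolatedness of $\p$ in $V(\F)$; the linear system then pins down $\alphahei=\hat\alphahei$.

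For (2), I pass to the completion $\widehat R:=\C[[\x-\p,\alphahei-\hat\alphahei]]$. The Jacobian of $(h_1,\ldots,h_n)$ with respect to $\alphahei$ at $(\p,\hat\alphahei)$ equals $\J(f_1,\ldots,f_s)(\p)^{\T}$ and has rank $s$, so some $s\times s$ submatrix indexed by rows $j_1,\ldots,j_s$ is invertible, and the implicit function theorem (in its formal-power-series form) lets me solve $h_{j_1}=\cdots=h_{j_s}=0$ locally for $\alphahei=\alphahei(\x)$ with $\alphahei(\p)=\hat\alphahei$. This produces the isomorphism
$$\widehat R\big/(\F,h_1,\ldots,h_n)\;\cong\;R\big/\bigl(\F,\tilde g_1,\ldots,\tilde g_{n-s}\bigr),$$
where $R:=\C[[\x-\p]]$, the $k_\ell$ run over the remaining indices, and $\tilde g_\ell(\x):=h_{k_\ell}(\x,\alphahei(\x))$, so the multiplicity of $\G$ at $(\p,\hat\alphahei)$ equals $\dim_\C R/(\F,\tilde g_1,\ldots,\tilde g_{n-s})$.

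The main obstacle is showing this last $\C$-dimension is strictly less than $\mu=\dim_\C R/(\F)$, equivalently that some $\tilde g_\ell$ is nonzero modulo $(\F)$; I plan to argue by contradiction. If every $\tilde g_\ell$ lies in $(\F)$, then combining the identities $\tilde g_\ell=h_{k_\ell}(\x,\alphahei(\x))\in(\F)$ with the IFT-defining identities $h_{j_k}(\x,\alphahei(\x))=0\in(\F)$ yields, in $(R/(\F))^{n}$, the equality $\bar{\nabla} f_{s+1}=-\sum_{i=1}^{s}\bar\alpha_i(\x)\,\bar{\nabla} f_i$, so the row vector $\bar{\mathbf{a}}:=(\bar\alpha_1(\x),\ldots,\bar\alpha_s(\x),1,0,\ldots,0)\in(R/(\F))^{n}$ satisfies $\bar{\mathbf{a}}\,\J(\F)=\mathbf{0}$ over $R/(\F)$. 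Now $\F$ is a regular sequence at $\p$ (it is $n$ polynomials in $n$ variables with isolated zero $\p$), so $R/(\F)$ is an Artinian Gorenstein local ring whose one-dimensional socle is generated by $\det\J(\F)\bmod(\F)$; multiplying $\bar{\mathbf{a}}\,\J(\F)=\mathbf{0}$ on the right by $\operatorname{adj}\J(\F)$ gives $\bar{\mathbf{a}}\cdot\det\J(\F)=\mathbf{0}$, so every component of $\bar{\mathbf{a}}$ annihilates the socle and hence lies in the maximal ideal of $R/(\F)$. But the $(s+1)$-th component of $\bar{\mathbf{a}}$ is $1$, which is a unit---contradiction. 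Therefore some $\tilde g_\ell\notin(\F)$, and adjoining it to $(\F)$ strictly drops the $\C$-dimension below $\mu$, establishing the multiplicity bound.
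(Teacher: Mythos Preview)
Your proof is correct. Part~(1) is essentially the paper's argument: the linear system in $\alphahei$ has full-rank coefficient matrix $\J(f_1,\ldots,f_s)(\p)^{\T}$ and a compatible right-hand side by the rank hypothesis, and isolation is inherited from that of $\p$.

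For part~(2) the two arguments share the opening move---eliminate $\alphahei$ locally, the paper via ``row operations'' on the $h_j$ (which are linear in $\alphahei$), you via the formal implicit function theorem---but finish differently. The paper asserts that after this elimination the multiplicity of $\G$ at $(\p,\hat\alphahei)$ coincides with that of the \emph{constant} specialization $\G(\x,\hat\alphahei)$ at $\p$, and then simply cites the left-null-space variant of Lemma~\ref{lem_lvz} with $\llambda=(\hat\alpha_1,\ldots,\hat\alpha_s,1,0,\ldots,0)$ to conclude. You instead keep the variable substitution $\alphahei=\alphahei(\x)$, to which that lemma does not literally apply, and supply a self-contained finish: the complete-intersection (hence Gorenstein) structure of $R/(\F)$, the Scheja--Storch identification of its socle with the class of $\det\J(\F)$, and the adjugate identity force every entry of a left null vector of $\J(\F)$ over $R/(\F)$ into the maximal ideal, contradicting the unit entry $1$ in position $s+1$. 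Your route is longer and invokes more commutative algebra, but it is fully rigorous and independent of the cited Leykin--Verschelde--Zhao result (indeed it essentially re-proves it in this setting); the paper's route is brief but leaves the passage from the row-reduced system to the constant specialization $\alphahei=\hat\alphahei$ unargued.
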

\begin{proof}
Let $$A_{ij}(\x)=\frac{\partial f_i}{\partial x_j}\in\C[\x],\ a_{ij}=\frac{\partial f_i(\p)}{\partial x_j}\in\C,\ i=1,\ldots,s+1,\ j=1,\ldots,n.$$
Denote the matrix $A=(a_{ij}),i=1,\ldots,s,j=1,\ldots,n$ and the row vector $b=(a_{s+1,1},\ldots,a_{s+1,n})$.

  On one hand, when we fix $\x=\p$, the system $$\H(\p,\alphahei)=\{h_j(\p,\alphahei)=a_{1j}\alpha_1+\ldots+a_{sj}\alpha_s+a_{s+1,j},\ j=1,\ldots,n\}$$ is a linear system with respect to the variables $\alpha_1,\ldots,\alpha_s$. Furthermore, it is easy to check that $\hat{\alphahei}$, which is determined by $AA^\T\hat{\alphahei}=-Ab^\T$, is the unique zero of $\H(\p,\alphahei)=\0$. That is, there exists a unique $\hat{\alphahei}$ such that the system $\G$ has an isolated zero at $(\p,\hat{\alphahei})$.



  On the other hand, with the row operations, we could reduce the system $\G$ to the system $$\{\alpha_1=l_1(\x),\ldots,\alpha_s=l_s(\x)\},$$ where $l_i(\x)$ are rational expressions and $\hat{\alpha}_i=l_i(\p)$. Thus, considering the multiplicity of $\G$ at $(\p,\hat{\alphahei})$ is equivalent to considering the multiplicity of $\G(\x,\hat{\alphahei})$ at $\p$. Note that $(\hat{\alpha}_1,\ldots,\hat{\alpha}_s,1,0,\ldots,0)\,\J(\F)(\p)=\0$. By Lemma \ref{lem_lvz} and (\ref{eq-G}), we know the second part holds. Thus, we finished the proof.
\end{proof}

%
%
%
%
In the above lemma, we construct $n$ new polynomials $h_1,\ldots,h_n$. In fact, we can get them from the following way. Note that $$\rank(\J(f_1,\ldots,f_s)(\p))=\rank(\J(\F)(\p))=s.$$ We know easily that $\J(f_{s+1})(\p)$ and $\J(f_{1})(\p),\ldots,\J(f_{s})(\p)$ are linearly dependent. Thus,
 we can do the linear combination between $f_{s+1}$ and $f_1,\ldots,f_s$ to eliminate this linear relationship. Let
 \begin{equation}
   g=f_{s+1}+\sum_{i=1}^{s}\alpha_if_i,
 \end{equation}
 where new variables $\alpha_i$ are used to represent the coefficients of the linear combination. Compute all the derivatives of $g$ with respect to the variables $x_1,\ldots,x_n$ and we get $$h_j=\frac{\partial g}{\partial x_j}=\alpha_1\frac{\partial f_1}{\partial x_j}+\cdots+\alpha_s\frac{\partial f_s}{\partial x_j}+\frac{\partial f_{s+1}}{\partial x_j},\ j=1,\ldots,n.$$

Thus, the above lemma tells us that after doing the linear combination of polynomials between $f_{s+1}$ and $f_1,\ldots,f_s$, we get an augmented system $\G$, which satisfies that the multiplicity of  $\G$ at $(\p,\hat{\alphahei})$ is lower than that of $\F$ at $\p$.
By repeating using the linear combination between polynomials in the original system and its related derivatives, we can construct a final deflated system, which processes an isolated simple zero. Denote $\mu$ be the multiplicity of $\F$ at $\p$. We do this repetitive process at most $\mu$ times.

Further, based on Lemma \ref{lem-s}, we have the following theorem.
\begin{thm}\label{thm-main}
  Let $\F=\{f_1,\ldots,f_n\}\subset\C[\x]$ be a polynomial system. $\p\in\C^n$ be an isolated singular zero of $\F=\0$. Denote $m=\deg(\F)$. Then there exists a square polynomial system $\widetilde{\F}'(\x,\alphahei)=\{g_1,\ldots,g_t\}\subset\C[\x,\alphahei]$, s.t.
\begin{enumerate}
  \item 
  $({\p},\hat{\alphahei})\in\C^t$ is an isolated simple zero of $\widetilde{\F}'(\x,\alphahei)=\0$;
  \item $t$ is bounded by $2^\mu\,n$, where $\mu$ is the multiplicity of $\p$ in $\F$;
  \item $\deg(\widetilde{\F}'(\x,\alphahei))\le m$.
\end{enumerate}
\end{thm}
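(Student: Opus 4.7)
The plan is to prove the three claims by induction on the multiplicity $\mu$, using Lemma \ref{lem-s} as the reduction step. The base case is trivial: if $\mu = 1$ then $\F$ is already regular at $\p$ and we may take $\widetilde{\F}'(\x,\alphahei)= \F$ with $t = n$ and no $\alphahei$ variables. For the inductive step, I would apply the construction of Lemma \ref{lem-s} to the current system $\F$, introducing $s$ new parameters $\alpha_1,\dots,\alpha_s$ (where $s = \rank(\J(\F)(\p))$) and $n$ new polynomial equations $h_1,\dots,h_n$, to obtain an augmented system $\G = \F \cup \{h_1,\dots,h_n\}$ in $n+s$ variables with $2n$ equations. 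Lemma \ref{lem-s} supplies a unique $\hat{\alphahei}$ for which $(\p,\hat{\alphahei})$ is an isolated zero of $\G$, and guarantees that its multiplicity $\mu'$ there is strictly smaller than $\mu$.

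Next I would extract a square subsystem. Since $s + n \le 2n$ and $\G$ has $2n$ equations, and since the Jacobian of $\G$ at $(\p,\hat{\alphahei})$ has rank equal to $(n+s)-\mu'+(\mu'-1\text{-drop})$ — more concretely, equal to $n+s$ when $\mu'=1$, and in general at least $n+s-(\mu'-1)$ — I can always select $n+s$ rows whose Jacobian submatrix has rank equal to the rank of $\J(\G)(\p,\hat{\alphahei})$. Renaming this square subsystem as $\G^{\square}$, I recurse on $(\G^{\square},(\p,\hat{\alphahei}))$, whose multiplicity is at most $\mu'<\mu$. By induction, there exists a square deflated system $\widetilde{\F}'(\x,\alphahei)$ at which $(\p,\hat{\alphahei}^{\mathrm{full}})$ is a simple zero, establishing claim (1).

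For the size bound (claim (2)), I track how the number of variables evolves. Writing $t_k$ for the number of variables after $k$ iterations, we have $t_0 = n$ and $t_{k+1} = t_k + s_k$ with $s_k \le t_k$, so $t_{k+1} \le 2 t_k$. Because the multiplicity strictly decreases at each iteration and starts at $\mu$, we perform at most $\mu$ iterations, giving $t \le 2^\mu n$. For the degree bound (claim (3)), each step of Lemma \ref{lem-s} produces polynomials of the form $h_j = \sum_i \alpha_i \partial f_i/\partial x_j + \partial f_{s+1}/\partial x_j$; viewed in $\C[\x,\alphahei]$, the total degree is at most $1 + (\deg f_i - 1) = \deg f_i \le m$, and subsequent iterations act the same way on these polynomials, so the degree never exceeds $m$.

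The main obstacle is the square-subsystem extraction: I need to verify that after each application of Lemma \ref{lem-s} one can always pick exactly $t_{k+1}$ equations from the $2 t_k$ equations of $\G$ whose Jacobian submatrix at $(\p,\hat{\alphahei})$ has the same rank as the full Jacobian, so that the resulting square subsystem still has $(\p,\hat{\alphahei})$ as an isolated zero with strictly smaller multiplicity. This follows from standard rank selection for the Jacobian, combined with the fact that dropping equations from a system with an isolated zero only enlarges the local zero set; one must check that the chosen square subsystem still isolates $(\p,\hat{\alphahei})$, which can be ensured by including in the selection those equations whose gradients span a complementary subspace to the kernel of $\J(\G)(\p,\hat{\alphahei})$. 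Once this local-isolation check is in place, the induction closes and all three claims follow.
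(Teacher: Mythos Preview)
Your induction on $\mu$ via Lemma~\ref{lem-s} is the right engine, and your degree and size bookkeeping are fine. The real problem is the step you flag yourself: the intermediate square-subsystem extraction. You assert that $\G^{\square}$ has multiplicity ``at most $\mu'<\mu$'', but that inequality points the wrong way---passing to a subsystem can only \emph{raise} (or preserve) the local multiplicity, so a priori $\mu_{\G^{\square}}\ge\mu'$. What you actually need for the induction is $\mu_{\G^{\square}}<\mu$, and matching the Jacobian rank does \emph{not} secure this. Take $\F=\{x,\,y^{3}\}$ at $\p=(0,0)$, so $\mu=3$ and $s=1$. Here $g=y^{3}+\alpha_{1}x$, $h_{1}=\alpha_{1}$, $h_{2}=3y^{2}$, and $\G=\{x,\,y^{3},\,\alpha_{1},\,3y^{2}\}$ in the three variables $(x,y,\alpha_{1})$ has $\hat{\alpha}_1=0$ and multiplicity $\mu'=2$. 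Both three-element subsets $\{x,\,y^{3},\,\alpha_{1}\}$ and $\{x,\,\alpha_{1},\,3y^{2}\}$ have Jacobian of rank $2=\rank\J(\G)(\p,\hat{\alphahei})$, yet the first has multiplicity $3=\mu$ while only the second has multiplicity $2<\mu$. Your rank criterion therefore admits a choice that does not advance the induction, and the ``complementary subspace'' refinement you propose cannot distinguish the two cases either, since both redundant rows have zero gradient.

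The clean fix---and this is in effect what the paper does through Algorithm~\ref{alg-subalg}, since no explicit proof of Theorem~\ref{thm-main} is given---is to postpone the square extraction. Iterate the construction of Lemma~\ref{lem-s} on the growing (over-determined) system without pruning; Lemma~\ref{lem_lvz} and its left-null-space variant still force the multiplicity to drop strictly at each step, so after at most $\mu-1$ iterations the augmented zero is simple. At that final stage the Jacobian has full column rank, and selecting any $t$ rows spanning the row space yields a square system with the same simple zero; no multiplicity argument is needed for this last extraction. The variable count obeys $t_{k+1}=t_k+s_k\le 2t_k$ throughout, giving $t\le 2^{\mu-1}n\le 2^{\mu}n$, and your degree argument carries over unchanged.
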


Next, based on Lemma \ref{lem-s} and Theorem \ref{thm-main}, we give an effective algorithm to compute a deflated square system from the input system with an approximate isolated singular zero below. It is an effective version of Lemma \ref{lem-abc}. $\theta$ is a tolerance to detect the regularity of the polynomials and we will talk about it in next subsection. $\varepsilon$ is another tolerance to judge the numerical rank of the Jacobian matrix at an approximate zero and we also talk about it in next subsection.

 \begin{algorithm}[H]         
\caption{ $\mathbf{CDSS}$ : Compute a deflated square system.}             
\label{alg-subalg}                  

\begin{algorithmic}[1]               

\REQUIRE ~~                          

a polynomial system $\F:=\{f_1,\ldots,f_n\}\subset \C[\x]$, an approximate isolated singular solution $\tilde{\p}\in\C^n$, two tolerances $\theta$ and $\varepsilon$.

\ENSURE ~~                           

    a square polynomial system $\widetilde{\F}'(\x,\alphahei):=\{\tilde{f}_1,\ldots,\tilde{f}_t\}\subset\C[\x,\alphahei]$ and a point $\tilde{\alphahei}$, s.t. $(\tilde{\p},\tilde{\alphahei})$ is an approximate regular zero of $\widetilde{\F}'(\x,\alphahei)=\0$. 

\STATE Compute $ \G=\{\mathbf{d}^{\boldsymbol{\gamma}}_\mathbf{x}(f)| \mathbf{d}^{\boldsymbol{\gamma}}_\mathbf{x}(f)$ is $\theta$-regular at $\tilde{\p}, f\in\F \}$;
\STATE Let $\H:=\F\cup\G$, $\X:=\x$;

\WHILE {$\rank(\J(\H)(\tilde{\p}),\varepsilon)\neq |\X|$}

\STATE Compute $r:=\rank(\J(\H)(\tilde{\p}),\varepsilon)$;

\STATE  Choose any $\H_1:=\{h_1,\ldots,h_{r}\}\subset\H$, s.t. $\rank(\J(\H_1)(\tilde{\p}),\varepsilon)=r$;


\STATE Choose $h_{r+1}:=\H\setminus\H_1$, s.t. $\dim \V(\H_1,h_{r+1})=n-r-1$;

\STATE Let $g:=h_{r+1}+\sum\limits_{j=1}^{r}\alpha_j h_j$;

\STATE Compute $\tilde{\alphahei}:=LeastSquares((\J(\H_1,h_{r+1})(\tilde{\p}))^{\T}(\alphahei,1)^{\T}=\0)$;

\STATE Compute $g_{1}:=\J_1(g),\ldots,g_{n}:=\J_n(g)$;

\STATE Set $\H:=\{\H,g_{1},\ldots,g_{n}\}$, $\X:=\x\cup\alphahei$ and $\tilde{\p}:=(\tilde{\p},\tilde{\alphahei})$;

\ENDWHILE

\RETURN  a square system $\widetilde{\F}'(\x,{\alphahei})=\{\H_1,g_1,\ldots,g_n\}$ and a point $\tilde{\alphahei}$.\\

\end{algorithmic}
\end{algorithm}

\begin{rem}
   1. The termination and correctness of the algorithm is guaranteed by Lemma \ref{lem-s} and Theorem \ref{thm-main}.

   2. In the above algorithm, we compute polynomials of every $f_i$, which are regular at $\p$ at the beginning. Then, we put all these polynomials together to compute a system $\F_0$, such that the rank of its Jacobian matrix at $\p$ is maximal. This operation can make $r$ as big as possible. In some cases, we have $r=n$, which means we need not introduce new variables, such as Example \ref{exm-2}. The aim of this preprocessing operation can speed up our algorithm.

\end{rem}


Now, we give two examples to illustrate Algorithm 1.
\begin{exm}\label{exm-4}
Consider a polynomial system $\F=\{f_1=-\frac{9}{4}+ \frac{3}{2}\,x_1+2\,x_2+3\,x_3+4\,x_4-\frac{1}{4}\,x_1^2,f_2=x_1-2\,x_2-2\,x_3-4\,x_4+2\,x_1x_2+3\,x_1x_3+4\,x_1x_4,f_3=8-4\,x_1-8\,x_4+2\,x_4^2+4\,x_1x_4-x_1x_4^2,
f_4=-3+3\,x_1+2\,x_2+4\,x_3+4\,x_4\}$. Given an approximate singular zero $$\tilde{\p}=(\tilde{p}_1,\tilde{p}_2,\tilde{p}_3,\tilde{p}_4)=(1.00004659,-1.99995813,-0.99991547,2.00005261)$$ of $\F=\0$ and the tolerance $\varepsilon=0.005$.

 First, we have the Taylor expansion of $f_3$ at $\tilde{\p}$: $$f_3=3\cdot10^{-9}-3\cdot10^{-9}(x_1-\tilde{p}_1)+0.00010522(x_4-\tilde{p}_4)+0.99995341(x_4-\tilde{p}_4)^2$$
 $$-0.00010522(x_1-\tilde{p}_1)(x_4-\tilde{p}_4)-(x_1-\tilde{p}_1)(x_4-\tilde{p}_4)^2.$$
%
Consider the tolerance $\theta=0.05$. Since $$|f_3(\tilde{\p})|<\theta,\ \left|\frac{\partial f_3}{\partial x_i}(\tilde{\p})\right|<\theta(i=1,2,3,4),\  \left|\frac{\partial^2 f_3}{\partial x_4^2}(\tilde{\p})\right|>\theta,$$
 we get a polynomial $$\frac{\partial f_3}{\partial x_4}=-8+4\,x_1+4\,x_4-2\,x_1x_4,$$ which is $\theta$-regular at $\tilde{\p}$.
 Similarly, by the Taylor expansion of $f_1,f_2,f_4$ at $\tilde{\p}$, we have that $f_1,f_2,f_4$ are all $\theta$-regular at $\tilde{\p}$.

 Thus, by Algorithm \ref{alg-subalg}, we have $\G=\{f_1,f_2,-8+4\,x_1+4\,x_4-2\,x_1x_4,f_4\}$. Compute $$r=\rank(\J(\G)(\tilde{\p}),\varepsilon)=3.$$ We can choose $\H_1=\{h_1=f_1,h_2=f_2,h_3=-8+4\,x_1+4\,x_4-2\,x_1x_4\}$ from $\H=\G\cup\F$.
 To $h_4=f_4\in\H\setminus\H_1$, let $$g=h_4+\alpha_1 h_1 +\alpha_2 h_2 +\alpha_3 h_3.$$

 First, by solving a Least Square problem: $$LeastSquares((\J(\H_1,h_4)(\tilde{\p}))^{T}[\alpha_1,\alpha_2,\alpha_3,1]^{T}=0),$$
 we get an approximate value: $$(\tilde{\alpha}_1,\tilde{\alpha}_2,\tilde{\alpha}_3)=(-1.000006509,-0.9997557989,0.000106178711).$$

 Then, compute 
 \begin{eqnarray*}
\left\{\aligned
   g_1=\frac{\partial g}{\partial x_1}&=3+\frac{3}{2}\alpha_1+\alpha_2+4\alpha_3-\frac{1}{2}\alpha_1x_1+2\alpha_2x_2+3\alpha_2x_3+4\alpha_2x_4-2\alpha_3x_4,\\
   g_2=\frac{\partial g}{\partial x_2}&=2+2\alpha_1-2\alpha_2+2\alpha_2x_1,  \\
   g_3=\frac{\partial g}{\partial x_3}&=4+3\alpha_1-2\alpha_2+3\alpha_2x_1,  \\
   g_4=\frac{\partial g}{\partial x_4}&=4+4\alpha_1-4\alpha_2+4\alpha_3+4\alpha_2x_1-2\alpha_3x_1,
  \endaligned\right.
\end{eqnarray*}
 and we get the polynomial set $$\mathbf{H}'=\{h_1,h_2,h_3,g_1,g_2,g_3,g_4\},$$ which satisfies $$\rank(\J(\mathbf{H}')(\tilde{\p},\tilde{\alpha}_1,\tilde{\alpha}_2,\tilde{\alpha}_3),\varepsilon)=7.$$
 Thus, we get the final square system $\widetilde{\F}'(\x,\alphahei)=\mathbf{H}_1$ and the point $\tilde{\alphahei}=(\tilde{\alpha}_1,\tilde{\alpha}_2,\tilde{\alpha}_3)=(-1.000006509,-0.9997557989,0.000106178711)$.
\end{exm}

 In this example, given the input polynomial system $\F$ with an approximate singular zero $\tilde{\p}$, we can get a final square system by Algorithm 1 with only one step. In fact, $\alpha_3$ is not necessary to be introduced in this example by noticing that we can acquire a needed square system $\widetilde{\F}'(\x,\alphahei)$ by using $F=f_4+\alpha_1 f_1 +\alpha_2 f_2$. We give another example to illustrate the case that we do not introduce new variables.

\begin{exm}\label{exm-2}
(DZ2) Continue with Example \ref{exm-DZ2}.
Given an approximate isolated singular zero $$\tilde{\p}=(\tilde{p}_1,\tilde{p}_2,\tilde{p}_3)=(0.00006787,0.00007577,-0.9999)$$ and a tolerance $\varepsilon=0.005$, we use the Taylor series to expand $f_i(i=1,2,3)$ at $\tilde{\p}$ and compare all the coefficients with a tolerance $\theta=\varepsilon$. For $f_1$, we have
{\small
\begin{eqnarray*}
  f_1&=2.121833963630161\cdot 10^{-17}+1.250528341612\cdot10^{-12}(x_1-\tilde{p}_1)+2.76380214\cdot10^{-8}\\
    &(x_1-\tilde{p}_1)^2+0.27148\cdot10^{-3}(x_1-\tilde{p}_1)^3+(x_1-\tilde{p}_1)^4.
\end{eqnarray*}}
It is obvious that only the absolute value of the coefficient of $(x_1-\tilde{p}_1)^4$ is bigger than $\theta$. Therefore, compute $\mathbf{d}^{(3,0,0)}_{(x_1,x_2,x_3)}(f_1)=4x$, which is $\theta$-regular at $\tilde{\p}$. Similarly, for $f_2,f_3$, we have the corresponding polynomial(s): $\{2x_1,x_2\}\ \text{and}\ f_3.$
Thus, we have $\G=\{4\,x_1,2\,x_1,x_2,f_3\}$.
It is easy to check that $$r=\rank(\J(\G)(\tilde{\p}),\varepsilon)=\rank(\J(4x_1,x_2,f_3)(\tilde{\p}),\varepsilon)=3.$$
Thus, we get the needed square system $\widetilde{\F}'(\x)=\G=\{4x_1,x_2,f_3\}$.
\end{exm}

In the above two examples, we assume that we have a right judgement on the tolerances $\theta$ and $\varepsilon$. In fact, the choice of the tolerances $\theta$ and $\varepsilon$ is important to our algorithm. Next, we give some analysis of them.

\subsection{The analysis of \texorpdfstring{$\theta$}{theta} and \texorpdfstring{$\varepsilon$}{varepsilon}}

As what we say in Example 1, $\theta$ is an important parameter in deciding if a polynomial is $\theta$-regular at $\tilde{\p}$. The other important parameter involved in our actual computation is $\varepsilon$, which is used to judge the numerical rank of the Jacobian matrix. Therefore, in this section, we will give some analysis about the parameters $\theta$ and $\varepsilon$.

First, we point out that $\theta$ is related to the absolute values of the coefficients of the Taylor expansion of the polynomial at its approximate zero.

For example, given a polynomial $f=x^2+10000\,y^2$ with an approximate zero $\tilde{\p}=(\tilde{p}_1,\tilde{p}_2)=(0.0006851,-0.0004368)$, we have the Taylor expansion of $f$ at $\tilde{\p}$: $$f=0.001908411762+0.0013702(x-\tilde{p}_1)-8.7360(y-\tilde{p}_2)+(x-\tilde{p}_1)^2+10000(y-\tilde{p}_2)^2.$$
Given $\theta=0.5$, we have $$|f(\tilde{\p})|<\theta,|\frac{\partial f}{\partial x}(\tilde{\p})|<\theta,|\frac{\partial f}{\partial y}(\tilde{\p})|>\theta.$$ Thus, we draw the conclusion that $f$ is $\theta$-regular at $\tilde{\p}$. However, considering that the lowest degree of $f$ is 2, we know that $f$ is singular at the exact zero $\p=(0,0)$ actually, which is a different result from the case of $\tilde{\p}$.
That means $\theta$ is not chosen properly. The main reason is that the coefficient of $f$ has a great fluctuation or the accuracy of $\tilde{\p}$ is not high enough. If given another approximate zero $\tilde{\q}=(\tilde{q}_1,\tilde{q}_2)=(0.000006851,-0.000004368)$ with higher precision, we have:
$$f=1.908411762\cdot10^{-7}+0.000013702(x-\tilde{q}_1)-0.087360(y-\tilde{q}_2)+(x-\tilde{q}_1)^2+10000(y-\tilde{q}_2)^2.$$
By this time, using the same $\theta=0.5$, we have $f$ is $\theta$-singular at $\tilde{\q}$, which is the same judgement as the exact case of $\p$.

In actual computation, to deal with this case, we give one solution: For a nonzero polynomial $f\in\C[\x]$, let $\Gamma_f$ be a set of the absolute values of all the coefficients of $f$. We denote the maximal and minimal ones inside $\Gamma_f$ as $M=\max(\Gamma_f)$ and $m=\min(\Gamma_f)$ respectively. If $m/M\le10^{-a}$, we regard that the coefficients of $f$ fluctuate a lot and take $\epsilon=(m+M)/2M$; Else, we take $\theta=(m+M)/(2M\times10^a)$, where $a\in \mathbb{N}$ is related to the precision of the given approximate zero $\tilde{\p}$. For example, if the accuracy of the given approximate zero $\tilde{\p}$ has three significant digits, we can take $a=3$.
Of course, we can overcome this problem thoroughly by refining the approximate zero to a higher precision with the input system if the Jacobian matrix of the system at $\tilde{\p}$ is numerically nonsingular.

In summary, the reason for the above situation is that we judge a polynomial, which is singular at the exact zero $\p$, as a polynomial being $\theta$-regular at the approximate zero $\tilde{\p}$.

 The other situation is that a polynomial, which is regular at the exact zero $\p$, may be judged as a polynomial being $\theta$-singular at the approximate zero $\tilde{\p}$.

 For example, consider the polynomial $f=\frac{1}{20}x+x^2+10000y^2$ with the approximate zero $\tilde{\q}=(\tilde{q}_1,\tilde{q}_2)=(0.000006851,-0.000004368)$. We have:
$$f=5.333911762\cdot10^{-7}+0.05001370(x-\tilde{q}_1)-0.087360(y-\tilde{q}_2)+(x-\tilde{q}_1)^2+10000(y-\tilde{q}_2)^2.$$
Still use $\theta=0.5$ and we get the judgement that $f$ is $\theta$-singular at $\tilde{\q}$. In fact, $f$ is regular at $\p=(0,0)$. One way to deal with this case is that we can take a smaller $\theta$. When we take $\theta=0.05$, we will acquire the appropriate result.

From the above analysis about the tolerance $\theta$, we know that the choice of $\theta$ is crucial to our method. We give a further theoretical analysis about the tolerance $\theta$ below. Here, we assume that the judgement of the other tolerance $\varepsilon$, which is used to decide the numerical rank of the Jacobian matrix at the approximate zero, is correct.

Let $\theta$ be a tolerance. Assume that we have computed an intermediate system $\H=\{h_1,\ldots,h_s\}\subset\C[\x']$. Denote $\x'=(\x,\alphahei)$. Assume that $\p$ is an isolated singular zero of the original system. The exact value of $\alphahei$ related to the coefficients of linear combinations is $\hat{\alphahei}$. Denote $\p'=(\p,\hat{\alphahei})$. Let $\tilde{\p}'$ be an approximate zero of $\H$ related to $\p'$ such that
$$\rank(\J(\H)(\tilde{\p}'))=s.$$
Next, we consider one more polynomial $h\in\C[\x']$. If $h$, which is regular at $\p'$, is judged as being $\theta$-singular at $\tilde{\p}'$, we may get a perturbed system finally. Specifically, compute the Taylor expansion of $h$ at $\tilde{\p}'$:
$$h=h(\tilde{\p}')+\sum\limits_{j}^{ }\frac{\partial h(\tilde{\p}')}{\partial x_j}(x_j-\tilde{p}'_j)+\sum\limits_{i,j}^{ }\frac{\partial h^2(\tilde{\p}')}{\partial x_i \partial x_j}(x_i-\tilde{p}'_i)(x_j-\tilde{p}'_j)+\cdots.$$

Since $h$ is $\theta$-singular at $\tilde{\p}'$, we know that $|h(\tilde{\p}')|<\theta$ and all $|\frac{\partial h(\tilde{\p}')}{\partial x_j}|<\theta$.
Thus, we compute
\begin{equation}\label{eq_h}
 \frac{\partial h}{\partial x_j}=\frac{\partial h(\tilde{\p}')}{\partial x_j}+2\sum_{i}^{ }\frac{\partial^2 h(\tilde{\p}')}{\partial x_i\partial x_j}(x_i-\tilde{\p}'_i)+\cdots.
\end{equation}

If there exists some $j$ such that
$$\rank(\J(\H,\frac{\partial h}{\partial x_j})(\tilde{\p}'))=s+1 \text{ and } \frac{\partial h(\p')}{\partial x_j}\neq 0,$$
we may derive a perturbed system in the end, where $\frac{\partial h}{\partial x_j}$ has and only has one perturbed term $\frac{\partial h(\p')}{\partial x_j}$ compared to the polynomial $\frac{\partial h}{\partial x_j}-\frac{\partial h(\p')}{\partial x_j}$ which vanishes at $\p'$.

For other cases, if
$$\rank(\J(\H,\frac{\partial h}{\partial x_j})(\tilde{\p}'))=s+1 \text{ and } \frac{\partial h(\p')}{\partial x_j}=0,$$
it is clear that $\frac{\partial h}{\partial x_j}$ vanishes at $\p'$. Thus it is exact.
If $$\rank(\J(\H,\frac{\partial h}{\partial x_j})(\tilde{\p}'))=s (\forall j),$$
according to our constructive method, we should do the linear combination $$f=\frac{\partial h}{\partial x_j}+\sum_{i=1}^{s}\alpha_i h_i \,\, (\text{ for some $j$ })$$
and compute its derivatives. Thus the perturbed term $\frac{\partial h(\p')}{\partial x_j}$ disappears. We will get an exact polynomial which vanishes at $\p'$ in the end. Notice that if $h_i$'s have perturbed terms, which are constants $h_i(\p')$. We know that if we compute the derivatives of $f$, these terms will disappear. Thus whether $h_i$'s have perturbed terms or not, the polynomials in the final deflated system derived by the linear combinations vanish at the exact zero $\p'$.


Now we consider the case that $h$ is regarded as $\theta$-regular at $\tilde{\p}'$ while it is singular at $\p'$. If
$$\rank(\J(\H,h)(\tilde{\p}'))=s,$$
we will do the linear combination of $h$ and $h_1,\ldots,h_s$ and compute its derivatives. It is obvious that this operation has no influence on our result. Usually the case
$$\rank(\J(\H,h)(\tilde{\p}'))=s+1$$
will not happen. It is related to the numerical computation of the rank of the Jacobian matrix of $(\H,h)$ at $\tilde{\p}'$.


As a summary of the foregoing analysis, we have:

\vspace{3mm}
Let $\F=\{f_1,\ldots,f_n\}\subset\C[\x]$ be a polynomial system. $\tilde{\p}\in\C^n$ is an approximate zero of $\F=\0$ and $\theta$ is a tolerance. According to our method, we acquire a final system $\widetilde{\F}'\subset\C[\x,\alphahei]$. During we compute the final system $\widetilde{\F}'$,
\begin{enumerate}
  \item if we judge a polynomial, which is singular at the exact zero $\p$, as being $\theta$-regular at $\tilde{\p}$, the final system $\widetilde{\F}'$ is accurate.
  \item if we judge a polynomial, which is regular at the exact zero $\p$, as being $\theta$-singular at $\tilde{\p}$, the final system $\widetilde{\F}'=\widetilde{\F}+\bm{\vartheta}$, is a perturbed system, where $\widetilde{\F}$ is an accurate system and $\bm{\vartheta}$ is the perturbed term, which satisfies $\max\limits_i|\vartheta_i|<\theta$.
\end{enumerate}

In actual computation, to make our method as accurate as possible, we give an adaptive adjustment step at the end of our algorithm. To be specific,
assume that the initial tolerance $\theta=\theta_1$. After the refining steps, denote the refined zero as $\bar{\p}$. We compute the Taylor expansions of all the related polynomials in computing the system $\widetilde{\F}'$ at $\bar{\p}$, including all the input polynomials. We denote the maximal absolute value of both the coefficients of the polynomials, which are judged as $\theta_1$-singular at $\tilde{\p}$ and the polynomials, which are judged as $\theta_1$-regular at $\tilde{\p}$, as $\theta_2$. It is also the term named ``Max err" in Tables 1 and 2 in the next section.

It is easy to imagine that $\theta_2\le\theta_1$ usually. If $\theta_2$ has a very higher precision than $\theta_1$, such as $\theta_1=10^{-2}$ and $\theta_2=10^{-13}$, we are sure that our conclusion is exact. If $\theta_2>\theta_1$ or $\theta_2$ still has a bad accuracy, such as $\theta_1=10^{-2}$ and $\theta_2=10^{-1}$ or $\theta_2=10^{-4}$,  we will take a smaller $\theta<\min\{\theta_1,\theta_2\}$ and repeat our method again.

After repeating our method several times, if $\theta_2$ is still bad, we will merely get a perturbed system.
\vspace{3mm}

Now, we give two examples to explain the above analysis.

\begin{exm}
  Given a polynomial system $\F=\{f_1=x+x^2+10000y^2,f_2=x^2+10000y^2\}$ with an approximate zero $$\tilde{\p}=(\tilde{p}_1,\tilde{p}_2)=(0.0006851,-0.0004368).$$ Consider the tolerances $\varepsilon=0.05$ and $\theta=0.5$. By the Taylor expansions of $f_i$ at $\tilde{\p}$, we know that $f_1,f_2$ are both $\theta$-regular at $\tilde{\p}$.

  Next, according to Algorithm 1, we compute $$\rank(\J(\F)(\tilde{\p}),\varepsilon)=2.$$ Thus, we can use Newton's method to refine $\tilde{\p}$ to a higher accuracy and get $$\tilde{\p}'=(0.0000000001,-0.0000008533).$$
  At this time, it's easy to check that $f_1$ is $\theta$-regular at $\tilde{\p}'$ and $f_2$ is $\theta$-singular at $\tilde{\p}'$. Therefore, for $f_2$, we have $$\frac{\partial f_2}{\partial x}=2x,\ \frac{\partial f_2}{\partial y}=20000y,$$ which are both $\theta$-regular at $\tilde{\p}'$. Furthermore, $$\rank(\J(f_1,\frac{\partial f_2}{\partial y}),\varepsilon)=2.$$ Thus, we get the final system $\widetilde{\F}'=\{f_1,\ 20000y\}$. After applying Newton's method, we get the refined zero $\bar{\p}=(\bar{p}_1,\bar{p}_2)=10^{-16}\cdot(0.53016,0)$.

  At last, we check if our chosen $\theta$ is proper. We compute the Taylor expansion of all the polynomials, which is judged as $\theta$-singular at $\tilde{\p}$, at the refined zero $\bar{\p}$ and get:
  $$f_2=2.810696256\cdot10^{-33}+1.060320\cdot10^{-16}\cdot(x-\bar{p}_1)+(x-\bar{p}_1)^2+20000\cdot (y-\bar{p}_1)^2.$$
  Thus, we have $$\mathrm{Max\ err}:=\max\{2.810696256\cdot10^{-33},1.060320\cdot10^{-16}\}=1.060320\cdot10^{-16}\ll\theta,$$
  which means that our final system $\widetilde{\F}'$ is more accurate than before.

\end{exm}

\begin{exm}\label{exm_perturb}
  Consider the system $\F=\{f_1=x+x^2+2xy+10000y^2,f_2=\frac{1}{20}x+x^2+2xy+10000y^2\}$ with an approximate zero $$\tilde{\p}=(\tilde{p}_1,\tilde{p}_2)=(0.000006851,-0.000004368).$$ Let the tolerances $\varepsilon=0.05$ and $\theta=0.5$. Similarly, by the Taylor expansions of $f_i$ at $\tilde{\p}$, we know that $f_1$ is $\theta$-regular at $\tilde{\p}$ and $f_2$ is $\theta$-singular at $\tilde{\p}$. Therefore, we have $$\frac{\partial f_2}{\partial x}=\frac{1}{20}+2x+2y,\ \frac{\partial f_2}{\partial y}=2x+20000y.$$
  Compute $$\rank(\J(f_1,\frac{\partial f_2}{\partial x}),\varepsilon)=2.$$ Thus, we get the final system $$\widetilde{\F}_1'=\{f_1,\ \frac{1}{20}+2x+2y\}.$$
  Obviously, $\widetilde{\F}_1'$ is a perturbed system and $\vartheta_2=\frac{1}{20}$ is the perturbed term, which satisfies $|{\vartheta}_2|<\theta$. It's easy to imagine that with $\widetilde{\F}_1'$, we could not get a good result. The main reason is that $\theta=0.5$ is too big, which leads to a wrong judgement on whether $f_2$ is $\theta$-regular at $\tilde{\p}$.

  If given another smaller tolerance $\theta'=0.05$, we will get a right judgement that $f_2$ is $\theta'$-regular at $\tilde{\p}$. Thus, we consider the linear combination of $f_1$ and $f_2$. Let $f=f_2+\alpha f_1$ and compute
  \begin{eqnarray*}
    g_1=\frac{\partial f}{\partial x} &=& \frac{1}{20}+2x+2y+\alpha(2x+2y+1), \\
    g_2=\frac{\partial f}{\partial y} &=& 2x+20000y+\alpha(2x+20000y),
  \end{eqnarray*}
  where $\alpha$ is a new variable and its initial value $\tilde{\alpha}=-0.050076986$. Compute $$\rank(\J(f_1,g_1,g_2),\varepsilon)=3.$$ Thus, we get the final system $\widetilde{\F}'=\{f_1,g_1,g_2\}$. Similarly, we consider applying Newton's method on the final system $\widetilde{\F}'$ and get the refined zero:
  $$\bar{\p}=(0.000000000000000,0.000000000000000,-0.050000000000000).$$
  Then, we check the coefficients of the terms with degree one of the Taylor expansion of $f$ at $\bar{\p}$ and get $$\mathrm{Max\ err}:=\{0,0,0\}=0\ll\theta'=0.05.$$
  Thus, we are sure that our final system $\widetilde{\F}'$ is accurate. Here, ``0" is not exact zero but means in Matlab machine accuracy.

\end{exm}

From the above two examples, we can see that once given an appropriate tolerance $\theta$, we can make sure that our final system is accurate. Otherwise, what we acquired is just a perturbed system, such as the system $\widetilde{\F}'_1$ in Example \ref{exm_perturb}.

Next, we continue analyzing the other tolerance $\varepsilon$, which is used to judge the numerical rank of a matrix. That is, we determine the numerical rank by comparing the absolute values of the singular values of the Jacobian matrix at approximate zero with the tolerance $\varepsilon$. Specifically, assume that we have computed an intermediate system $\H=\{h_1,\ldots,h_s\}\subset\C[\x']$. Denote $\x'=(\x,\alphahei)$. Assume that $\p$ is an isolated singular zero of the original system. The exact value of $\alphahei$ related to the coefficients of linear combinations is $\hat{\alphahei}$. Denote $\p'=(\p,\hat{\alphahei})\in\C^t$. Let $\tilde{\p}'\in\C^t$ be an approximate zero of $\H$ related to $\p'$ such that
$$\rank(\J(\H)(\tilde{\p}'),\varepsilon)=s.$$

Next, we consider one more polynomial $h_{s+1}\in\C[\x']$. Given the tolerance $\theta$, we can compute a polynomial $h$ from $h_{s+1}$, which is $\theta$-regular at $\tilde{\p}'$.
Denote $$\rank(\J(h_1,\ldots,h_s,h)(\p'))=r_1,\ \ \ \rank(\J(h_1,\ldots,h_s,h)(\tilde{\p}'),\varepsilon)=r_2.$$
For simplicity, we denote the deflated system as $\H'$, which comes from $\{\H,\, h_{s+1}\}$ after one step deflation, and its corresponding exact zero as $\q$, whose partial projection is $\p'$.

 According to the above analysis of $\theta$, for $h_{s+1}$, we have the following cases:
\begin{enumerate}
  \item if $\theta$ is chosen properly, that is, we judge $h_{s+1}$, which is regular or singular at $\p'$, as being $\theta$-regular or $\theta$-singular at $\tilde{\p}'$ respectively, we know that $h$ is regular at $\p'$. Thus, we have:
  \begin{enumerate}[(a).]
    \item  if $r_2=r_1$, we, of course, get an exact system $\H'$. That is, $\H'(\q)=\0$.
    \item  if $r_2<r_1$, according to our algorithm, we consider do the linear combination: $$g=h+\sum\limits_{j=1}^{s}\alpha_jh_j$$ and compute all the derivatives of $g$ with respect to all variables: $g_i=\frac{\partial g}{\partial x_i'},\ i=1,\ldots,t$. Correspondingly, $\H'=\{h_1,\ldots,h_s,g_1,\ldots,g_t\}$. Note that $\J(h)(\p')$ and $\J(h_1)(\p'),\ldots,\J(h_s)(\p')$ are actually linear independent, which means that the equations $(\alpha_1,\ldots,\alpha_s,1)\,\J(\H)(\p')=\0$ has no solution. Thus, although we can give the initial value $\tilde{\alpha}_j$ of $\alpha_j$ by solving a Least Squares problem, the linear independent will bring us some inexact polynomials $g_i$, which means $g_i(\q')\neq0$. Further, we may get a perturbed system $\H'$. That is, $\H'(\q)\neq\0$.
    \item  if $r_1<r_2$, we will add $h$ to the system $\H$ directly and get an exact system $\H'=\{h_1,\ldots,h_s,h\}$.
  \end{enumerate}
  \item if $\theta$ is chosen too big, that is, we judge $h_{s+1}$, which is regular at $\p'$, as being $\theta$-singular at $\tilde{\p}'$, we may get a perturbed polynomial $h$, which means that $h(\p')\neq 0$ and $h-h(\p')$ is regular at $\p'$. Thus, we have:
  \begin{enumerate}[(a).]
    \item if $r_2=r_1$, only the choice of $\theta$ affects our final result. Thus, we may get a perturbed system $\H'$ in this case.
    \item if $r_2<r_1$, with a similar discussion with the case of 1(b), we get a perturbed system $\H'$.
    \item if $r_1<r_2$, we add $h$ to $\{h_1,\ldots,h_s\}$ directly and get a perturbed system $\H'$.
  \end{enumerate}
  \item if $\theta$ is chosen too small, that is, we judge $h_{s+1}$, which is singular at $\p'$, as being $\theta$-regular at $\tilde{\p}'$, we know that $h=h_{s+1}$ is singular at $\p'$. Thus, we have:
  \begin{enumerate}
    \item if $r_2=r_1$, only the choice of $\theta$ affects our result. Thus, the system $\H'$ is exact in this case.
    \item if $r_2<r_1$, similar to the case of 1(b), we consider do the linear combination: $$g=h+\sum\limits_{j=1}^{s}\alpha_jh_j.$$ One difference from 1(b). is that $h$ is singular at $\p'$. Thus, all the $g_i=\frac{\partial g}{\partial x_i'}$ are exactly vanish at $\q$. So, the system $\H'$ is also exact in this case.
    \item if $r_1<r_2$, we add $h$ to $\{h_1,\ldots,h_s\}$ directly and get an exact system $\H'$.
  \end{enumerate}
\end{enumerate}

With the above analysis, we know that the choice of the tolerances $\theta$ and $\varepsilon$ has an influence on our final deflated system: an exact deflated system or a perturbed system. To judge which case a final deflated system belongs to, we use the following judgment method:

Denote the input system as $\F=\{f_1,\ldots,f_n\}$, the final deflated system $\widetilde{\F}'$. Noticing that we use Newton's method to refine the system $\widetilde{\F}'$, thus, we denote Newton's iteration sequence as $\{\tilde{\p}_l,\, l\ge 1\}$ and the final certified zero $\tilde{\p}'$.
\begin{itemize}
  \item First, we check if Newton's iteration sequence $\{\tilde{\p}_l,\, l\ge 1\}$ is quadratic convergence. If not, we claim that our deflated system is a perturbed system.
  \item If it is, we compute
  $$\Delta:=\max\{|f_i({\tilde{\p}'})|\,|f_i\in\F,\ i=1,\ldots,n\}. $$
  \item Next, we give a tolerance $\theta'$, which is usually a very small value, and compare the magnitude of $\theta'$ and $\Delta$. If $\Delta<\theta'$, we regard the final deflated system $\widetilde{\F}'$ as an exact system; otherwise, $\widetilde{\F}'$ is a perturbed system.
\end{itemize}

Of course, for the exact case, we are done. For the perturbed case, we hope to make our final deflated system as accurate as possible by adjusting the values of $\theta$ and $\varepsilon$. However, we still do not have a good idea on how to distinguish the effect of the two tolerances  $\theta$ and $\varepsilon$ on the final system. Fortunately, noting that the tolerance $\theta$ is used to accelerate our algorithm and is not necessary, therefore, according to the remark of Lemma \ref{lem_lvz}, we can use the deflation construction \ref{eq-G} to compute the final system. In this case, we just need to consider the tolerance $\varepsilon$, which is used to judge the numerical rank of the Jacobian matrix. That's to say, even if the first two steps of Algorithm \ref{alg-subalg} are removed, our deflation construction process can still work well.
 Considering the possible judgment, our final system can also be a perturbed system. Next, we give a possible modified method to overcome this case.

let $\F=\{f_1,\ldots,f_n\}$ be the input system, $\tilde{\p}\in\C^n$ be the initial approximate zero. Let $\varepsilon$ and $\theta'$ be the given tolerances.
\begin{itemize}
  \item First, assume $\rank(\J(\F)(\tilde{\p}),\varepsilon)=n$. We apply Newton's method on the system $\F$ and get the refined zero $\tilde{\p}'$. Next, we check if Newton's iteration sequence is quadratic convergence. If it is, we continue comparing the magnitude of $\theta'$ and $\Delta$. If $\Delta<\theta'$, we regard $\F$ as a system with an isolated simple zero; otherwise, we know $\F$ is a system with a multiple zero and $\rank(\J(\F)(\tilde{\p}),\varepsilon)<n$.
  \item Assume $\rank(\J(\F)(\tilde{\p}),\varepsilon)=n-1$. After using the deflation construction in Algorithm \ref{alg-subalg} once(from step 3 to step 10), we get a deflation system $\widetilde{\F}_1$ and an approximate zero $\tilde{\p}_1$. Then, we consider all the possibilities of $\rank(\J(\widetilde{\F}_1)(\tilde{\p}_1),\varepsilon)$. For every case, we go on our deflation construction in Algorithm \ref{alg-subalg} and use our mentioned judging method to check which case the final deflated system belongs to. As long as the final deflated system is judged to be an exact system, we will stop our deflation process; Otherwise, we know $\rank(\J(\F)(\tilde{\p}),\varepsilon)<n-1$.
  \item Assume $\rank(\J(\F)(\tilde{\p}),\varepsilon)=n-2$. We consider as the case of $n-1$.
\end{itemize}
About the above judgment process, we have two things to say:
\begin{itemize}
  \item[1.] The above judgement process must terminate in finite steps considering that our deflation construction terminates in finite steps.
  \item[2.] In the above judgement process, we traverse all the possibilities of the rank of the Jacobian matrix. Thus, there must be at least one case that we get an exact deflated system.
\end{itemize}

Now, we give an example to illustrate our idea.
\begin{exm}
  Continue with Example \ref{exm_perturb}. Here, we only use the tolerance $\varepsilon=0.05$ to judge the numerical rank. First, we compute $$\rank(\J(f_1,f_2)(\tilde{\p}),\varepsilon)=2.$$ Thus, we consider using Newton's method on the system $\F$ directly. Given an iterative error $10^{-8}$, we have the following Newton's iteration sequence:
  \begin{table}[H]
\small
\center
\begin{tabular}{cccc}
\hline
 $\p_i$ & x & y  \\
\hline
    $\tilde{\p}_1$   &   $\ 0.000006851$    &     $-0.000004368$  \\
    $\tilde{\p}_2$   &   $\ 0.0000000000000$    &     $-0.0000021841948$  \\
    $\tilde{\p}_3$   &   $-0.0000000000000$    &     $-0.0000010920974$  \\
    $\tilde{\p}_4$   &   $-0.0000000000000$    &     $-0.0000005460487$  \\
    $\tilde{\p}_5$   &   $-0.0000000000000$    &     $-0.0000002730243$  \\
    $\tilde{\p}_6$   &   $-0.0000000000000$    &     $-0.0000001365122$  \\
    $\tilde{\p}_7$   &   $-0.0000000000000$    &     $-0.0000000682561$  \\
    $\tilde{\p}_8$   &   $\ 0.0000000000000$    &     $-0.0000000341280$  \\
    $\tilde{\p}_9$   &   $-0.0000000000000$    &     $-0.0000000170640$  \\
    $\tilde{\p}_{10}$   &   $-0.0000000000000$    &     $-0.0000000853201$  \\
\hline
\end{tabular}
  \end{table}
 We can check easily that Newton's iteration sequence $\{\tilde{\p}_j,\ j=1,\ldots,10\}$ is linear convergence. According to our judging criteria, we know that $$\rank(\J(f_1,f_2)(\tilde{\p}),\varepsilon)=1.$$

  Next, according to our construction process in Algorithm \ref{alg-subalg}, we let $$g=f_2+\alpha f_1$$ and compute $$g_1=\J_1(g)=\alpha(2x+2y+1)+(1/20+2x+2y),\ g_2=\alpha(2x+20000y)+(2x+20000y).$$
  We have $\tilde{\alpha}=-0.091484814324$.

  Next, let $\widetilde{\F}_1=\{f_1,g_1,g_2\}$ and $\tilde{\p}_1=(\tilde{\p},\tilde{\alpha})$. By our given revised method above, we continue considering all the possibilities of $\rank(\J(\widetilde{\F}_1)(\tilde{\p}_1),\varepsilon)$. For example, we consider  the  case of $$\rank(\J(\widetilde{\F}_1)(\tilde{\p}_1),\varepsilon)=3.$$
  Similarly, given the iterative error $10^{-8}$, by using Newton's method on the system $\widetilde{\F}_1$, we get the following iteration sequence:

\begin{table}[H]
\small
\center
\begin{tabular}{cccc}
\hline
 $\p_i$ & x & y & $\alpha$ \\
\hline
 $\tilde{\p}_1$    &   0.000006851 &  $-0.000004368$  &  $-0.091484814324 $  \\
 $\tilde{\p}_2$    &   0.0000002081968 &  \ 0.0000001993959 & $-0.0500009466172 $   \\
 $\tilde{\p}_3$    &   0.0000000003977 & $-0.0000000000002$ & $-0.0500000007560$ \\
  $\tilde{\p}_4$    &   0.0000000000000 &  \ 0.0000000000000 & $-0.0500000000000$ \\
\hline
\end{tabular}
  \end{table}
It is easy to check that the iteration sequence $\{\tilde{\p}_j,j=1,2,3,4\}$ is quadratic convergence. Furthermore, given a tolerance $\theta'=10^{-12}$, we can compute
$$\Delta:=\max\{|f_1({\tilde{\p}_4})|,\ |f_2({\tilde{\p}_4})|\}=0 $$ and verify that $\Delta<\theta'$. Thus, we regard the final deflated system $\widetilde{\F}'=\widetilde{\F}_1$ as an exact system. At the same time, we stop our deflation process.

\end{exm}

%
%
%
%
%
%

Until now, we have finish all the discussions about the tolerances $\theta$ and $\varepsilon$.
 Once given a polynomial system $\F\subset\C[\x]$ with an isolated singular zero, we use Algorithm \ref{alg-subalg} to compute a new system $\widetilde{\F}'(\x,\alphahei)$, which has a simple zero. What's more, according to the analysis of the tolerances $\theta$ and $\varepsilon$, our final system $\widetilde{\F}'$ is an accurate system usually. For the perturbed case, we also give one ergodic way to adjust our final result as accurate as possible. Thus, we can use the final system $\widetilde{\F}'$ to certify the isolated zeros of the input system.

In the following, by using the algorithm  {\bf verifynlss} in INTLAB\cite{Rump3}, we give an example to explain how we certify the isolated singular zero of the input system.
\begin{exm}\label{exm-3}

Continue with Example \ref{exm-4}. Applying the algorithm {\bf verifynlss}, we get the system: 
\begin{eqnarray*}
\widetilde{\F}(\x,\alphahei)=\left\{\aligned
   \tilde{f}_1&=-\frac{9}{4}+ \frac{3}{2}x_1+2x_2+3x_3+4x_4-\frac{1}{4}x_1^2,  \\
   \tilde{f}_2&=x_1-2x_2-2x_3-4x_4+2x_1x_2+3x_1x_3+4x_1x_4,  \\
   \tilde{f}_3&=-8+4x_1+4x_4-2x_1x_4, \\
   \tilde{f}_4&=3+\frac{3}{2}\alpha_1+\alpha_2+4\alpha_3-\frac{1}{2}\alpha_1x_1+2\alpha_2x_2+3\alpha_2x_3+4\alpha_2x_4-2\alpha_3x_4,\\
   \tilde{f}_5&=2+2\alpha_1-2\alpha_2+2\alpha_2x_1,  \\
   \tilde{f}_6&=4+3\alpha_1-2\alpha_2+3\alpha_2x_1,  \\
   \tilde{f}_7&=4+4\alpha_1-4\alpha_2+4\alpha_3+4\alpha_2x_1-2\alpha_3x_1,
  \endaligned\right.
\end{eqnarray*}
%
%
%
%
and two verified inclusions
{\small
$$
\boldsymbol{\mathbf{X}}=
\left[
\begin{array}{cccccccccccccccccc}
  [\ \ 0.99999999999999,\ \ \ 1.00000000000001] \\

  [-2.00000000000001,\ \  -1.99999999999998] \\

  [-1.00000000000001,\ \ -0.99999999999999] \\

  [\ \ 1.99999999999999,\ \ \ 2.00000000000001]
%
\end{array}
\right]
$$
and
$$
\boldsymbol{\mathcal{A}}=
\left[
\begin{array}{cccccccccccccccccc}
  [-1.00000000000001,\ -0.99999999999999] \\

  [-1.00000000000001,\ -0.99999999999999] \\

  [-0.00000000000001,\ -0.00000000000001]

\end{array}
\right]
.$$
}

For the deflated system $\widetilde{\F}(\x,\alphahei)$, we affirm that there is a unique isolated simple zero $(\hat{\x},\hat{\alphahei})\in(\X,\boldsymbol{\mathcal{A}})$, such that $\widetilde{\F}(\hat{\x},\hat{\alphahei})=\0$.
What's more, the projection $\hat{\x}$ of $(\hat{\x},\hat{\alphahei})$ corresponds to the isolated singular zero of the input system $\F$. That's to say, we certified the isolated singular zeros of the original system.
\end{exm}

\section{Experiments and results}
We implement our method in Matlab of Algorithm \ref{alg-subalg}. The code and some examples can be found in \url{http://www.mmrc.iss.ac.cn/~jcheng/VDSS}.
In this section, we show the results of the experiment and the comparison of our method with some other methods. We do the experiments in Matlab R2012b with INTLAB-V5.5 on a computer with Windows 7, Intel $i7$ processor and $8GB$ memory.

in \cite{zhi3}, by modifying the method proposed by Yamamoto \cite{Yama}, they give a deflation method to compute a regular and square augmented system. which can be used to prove the existence of an isolated singular solution of a slightly perturbed system. Moreover, by applying INTLAB function {\bf verifynlss}\cite{Rump3}, they also give an algorithm $\mathbf{viss}$ to compute verified error bounds. However, noticing that their method is essentially a deflation method. Thus, we also implement our algorithm based on INTLAB function {\bf verifynlss}.

In Table 1, we compare our algorithm $\mathbf{VDSS}$ with the algorithm $\mathbf{viss}$. These examples are relatively simple and small scale, which can be found in \cite{dayton1,zhi3}. We also list them in \url{http://www.mmrc.iss.ac.cn/~jcheng/VDSS/fun.m}. We denote $var$ the number of polynomials, $mul$ the multiplicity and Verified acc the final verified accuracy, which is measured by the breath of the verified inclusion $\X$. And Max err is $\delta_2$ as mentioned in Section 4.2.
We use a same initial accuracy $10^{-4}$ for all the examples. ``true" means we get two same endpoints of the verified inclusion. When the term for Max err is ``0", it does not mean Max err is exactly zero and only shows in Matlab machine precision.

{\small
\begin{table}[H]
\caption{Comparison of {\bf VDSS} and {\bf viss} for simple systems}
\centering
\begin{tabular}{|c|c|c|c|c|c|c|c|c|c|c|}
\hline
\multicolumn{1}{|c|}{\multirow {2}{*}{System}}&
\multicolumn{1}{c|}{\multirow {2}{*}{$var$}}&
\multicolumn{1}{c|}{\multirow {2}{*}{$mul$}}&
\multicolumn{2}{c|}{Verified acc}&
\multicolumn{1}{c|}{\multirow {2}{*}{Max err}}&
\multicolumn{2}{c|}{times}&
\multicolumn{2}{c|}{Final size} \\
\cline{4-5}
\cline{7-10}

\multicolumn{1}{|c|}{}&\multicolumn{1}{c|}{}&\multicolumn{1}{c|}{}&
\multicolumn{1}{c|}{VDSS} & viss &\multicolumn{1}{c|}{}& \multicolumn{1}{c|}{VDSS} & viss & \multicolumn{1}{c|}{VDSS} & viss \\
\hline
DZ1 & 4 & 131  & true & e-322 & 0 & 0.3066 & 0.3337 & 4 & 16  \\ \hline
DZ2 & 3 & 16  & e-14 & e-14 & 0 & 0.2989 & 0.7343 & 3 & 24 \\ \hline
DZ3 & 2 & 4   & e-14 & e-15 & e-14 & 0.8780 & 1.0093 & 3 & 10 \\ \hline
cbms1 & 3 & 11  & true & e-322 & 0  & 0.1851 & 0.1107 & 3 & 6 \\ \hline
cbms2 & 3 & 8   & true & e-322 & 0 & 0.2546 & 0.1271 & 3 & 6  \\ \hline
mth191 & 3 & 4  & e-14 & e-14 & e-32 & 0.3118 & 0.1221 & 4 & 6  \\ \hline
KSS & 10 & 638   & e-14 & e-14 & 0 & 8.2295 & 0.3036 & 19 & 20 \\ \hline
RuGr09 & 2 & 4  & e-323 & e-14 & 0 & 0.1567 & 0.4955 & 2 & 8 \\ \hline
LZ & 100 & 3  & e-320 & e-14 & 0 & 2.0197 & 13.3068 & 100 & 300 \\ \hline
Ojika1 & 2 & 3 & e-14  & e-14 & 0 & 0.7636 & 0.3447 & 5 & 6  \\ \hline
Ojika2 & 3 & 2  & e-14 & e-14 & e-16 & 0.3936 & 0.2942 & 5 & 6 \\ \hline
Ojika3 & 3 & 2  & e-14 & e-14 & 0 & 0.3967 & 0.3427 & 4 & 6 \\ \hline
Ojika4 & 3 & 3   & e-14 & e-14 & 0 & 0.1851 & 1.0621 & 3 & 9 \\ \hline
Decker2 & 3 & 4   & e-323 & e-14 & 0 & 0.1752 & 0.4650 & 3 & 8 \\ \hline
Caprasse & 4 & 4   & e-14 & e-14 & e-31 & 2.0180 & 0.5126 & 6 & 8 \\ \hline
Cyclic9 & 9 & 4   & e-14 & e-14 & e-15 & 5.9266 & 3.6878 & 12 & 18 \\ \hline
\end{tabular}
\end{table}}

From Table 1, we can see that our algorithm is effective. On one hand, the verified accuracy of our method is never worse than $\mathbf{viss}$ for all these examples.   
On the other hand, thanks to our acceleration strategies, our practical size and computing time are smaller than those of $\mathbf{viss}$ in most cases.

We also compare our method with {\bf viss} for large-scale polynomial systems.
 All the examples in Table 2 can be found in \url{http://www.mmrc.iss.ac.cn/~jcheng/VDSS/example.m}.
 The example LZ2000 can be found in \cite{zhi1}. The example nonpoly3 is a non-polynomial nonlinear system. We construct the other examples as below: First, we produce some polynomials randomly to form a zero-dimensional system $\{f_1,\ldots,f_n\}$, which has a simple zero $\p$ and $\deg(f_i)\ge 2$ usually.
 The final systems have the form:
 $\F=\{f_i^{d_i}+g_i,1\le i\le n, g_i\in\{f_1^{d'_1},\ldots,f_n^{d'_n},0\},d_i\ge 1, d'_i\ge 1,1\le i\le n\}$. The new systems are always dense polynomial systems. The examples named  simple1, reduce3, big1, big2, big3, large3, large6, large8 are of the form that $g_i=0 (1\le i \le n)$; The examples named addvar3, unre3, unre5, rankone2, rankone3 are of the form that $g_i$ are not all zeros. The ranks of the Jacobian matrices of the examples rankone2, rankone3 at the zeros both are one.
 In Table 2, ``-" means there is no results with the code. 
{\small
\begin{table}[H]
\caption{Comparison of {\bf VDSS} and {\bf viss} for large systems}
\centering
\begin{tabular}{|c|c|c|c|c|c|c|c|c|c|c|c|c|}
\hline
\multicolumn{1}{|c|}{\multirow {2}{*}{System}}&
\multicolumn{1}{c|}{\multirow {2}{*}{$var$}}&
\multicolumn{1}{c|}{\multirow {2}{*}{$mul$}}&

\multicolumn{2}{c|}{Verified acc}&
\multicolumn{1}{c|}{\multirow {2}{*}{Max err}}&
\multicolumn{2}{c|}{times}&
 \multicolumn{2}{c|}{Final size} \\
\cline{4-5}
\cline{7-10}

\multicolumn{1}{|c|}{}&\multicolumn{1}{c|}{}&\multicolumn{1}{c|}{}&
\multicolumn{1}{c|}{VDSS} & viss &\multicolumn{1}{c|}{}& \multicolumn{1}{c|}{VDSS} & viss & \multicolumn{1}{c|}{VDSS} & viss \\
\hline
LZ2000 & 2000 & 3   & e-319 & -- & 0 & 448.07 & -- &2000 & --\\ \hline
simple1 & 5 & 9   & e-14 & e-14  & 0 & 0.29 & 8.20 &5 & 45\\ \hline
addvar2 & 4 & 12   & e-14 & e-13  & e-14 & 11.10 & 250.67 &6 & 32\\ \hline
reduce3 & 4 & 24  & e-14 & e-14  & e-11 & 12.21 & 317.50  &7 & 12\\ \hline
unre3 & 4 & 36   & e-15 & e-14  & e-13 & 4.08 & 360.32 &4 & 32\\ \hline
unre5 & 8 & 576   & e-14 & e-14  & e-13 & 24.26 & 229.83 &8 & 64\\ \hline
big1 & 20 & 512  & e-14 & e-15  & e-12 & 29.92 & 1724.09 &20 & 160\\ \hline
big2 & 20 & 8192  & e-14 & e-14  & e-12 & 40.90 & 1751.61  &20 & 160\\ \hline
big3 & 30 & 196608   & e-15 & e-14  & e-15 & 155.18 & 425.51  &30 & 240\\ \hline
rankone2 & 6 & 32   & e-15 & e-15  & e-15 & 6.8693 & 1.5199 &11 & 12\\ \hline
rankone3 & 6 & 96   & e-15 & e-14  & e-14 & 12.44 & 136.54 &11 & 48\\ \hline
breadth2 & 5 & $2^5$   & e-322 & --  & 0 & 0.20 & -- &5 & --\\ \hline
large3 & 100 & $3^{100}$   & e-323 & e-319  & 0 & 187.88 & 647.86 &100 & 400\\ \hline
large6 & 500 & $4^{100}$   & e-321 & e-34  & 0 & 905.00 & 3262.78 &500 & 2000\\ \hline
large8 & 500 & $4^{300}$   & e-321 & --  & 0 & 1745.85 & -- &500 & --\\ \hline
nonpoly3 & 3 & 64  & e-322 & e-14  & 0 & 0.19 & 6.62 &3 & 36\\ \hline
\end{tabular}
\end{table}}

From Tables 2, we can see that for the examples with more variables and high multiplicity, our method has a better result regardless of the verified accuracy, computing time or the final scale.

We also test the example in \cite{mourrain2} with the form: $\{x_1^3-x_1^2-x_2^2,x_2^3+x_2^2-x_3,\ldots,x_{n-1}^3+x_{n-1}^2-x_n,x_n^2\}$. The example named breath2 in Table 2 has this form for $n=5$. The method in \cite{mourrain1} can compute this example for $n=6$ and it takes 659.59 seconds with the final size for 321 variables and 819 polynomials. We test the cases for $n=6, n=1000$ and $n=2000$ with our code, it takes 0.228965 seconds, 165.274439 seconds and 1036.773847 seconds respectively without introducing new variables.

For our method, although we introduce new variables, the size of our final deflated system is small in experiments. And further,
 we also compare our method with the other four deflation methods \cite{mourrain1} on the following four small systems.
\begin{enumerate}[1:]
  \item $\{x_1^4-x_2x_3x_4,\ x_2^4-x_1x_3x_4,\ x_3^4-x_1x_2x_4,\ x_4^4-x_1x_2x_3\}$ at $(0,0,0,0)$ with $\mu=131$;
  \item $\{x^4,\ x^2y+y^4,z+z^2-7x^3-8x^2\}$ at $(0,0,-1)$ with $\mu=16$;
  \item $\{14x+33y-3\sqrt5x^2-12\sqrt5xy-12\sqrt5y^2-6\sqrt5+x^3+6x^2y+12xy^2+8y^3+\sqrt7,41x-18y-\sqrt5+8x^3-12x^2y+6xy^2-y^3+12\sqrt7xy-12\sqrt7x^2-3\sqrt7y^2-6\sqrt7\}$ at $\p\approx(1.5055,0.36528)$ with $\mu=5$;
  \item $\{2x_1+2x_1^2+2x_2+2x_2^2+x_3^2-1,(x_1+x_2-x_3-1)^3-x_1^3,(2x_1^3+5x_2^2+10x_3+5x_3^2+5)^3-1000x_1^5\}$ at $(0,0,-1)$ with $\mu=18$.
\end{enumerate}

 The result (see also in \cite{mourrain1}) is below, where method A is in \cite{dayton1,lvz}, method B is in \cite{hauensten}, method C is in \cite{giusti3}, method D is in \cite{mourrain1}, method E is our method $\mathbf{VDSS}$. In Table 3, we denote $Poly$ the number of the polynomials of the final deflation system and $Var$ the number of the variables in the final deflation system. Noting that our final system does not always contain all the polynomials of the input system, therefore, we will contain the number of the different polynomials in the input system, which is not contained in the final system, into $Poly$.
\begin{table}[H]
\small
\center
\caption{Comparison of VDSS and other methods for four examples}
\begin{tabular}{|c|c|c|c|c|c|c|c|c|c|c|c|c|c|c|}
\hline
\multicolumn{1}{|c|}{\multirow {2}{*}{ }}&
\multicolumn{2}{c|}{Method A}&
\multicolumn{2}{c|}{Method B}&
\multicolumn{2}{c|}{Method C}&
\multicolumn{2}{c|}{Method D}&
\multicolumn{2}{c|}{Method E} \\
\cline{2-11}

\multicolumn{1}{|c|}{}&\multicolumn{1}{c|}{$Poly$}&{$Var$}&\multicolumn{1}{c|}{$Poly$}&{$Var$}&\multicolumn{1}{c|}{$Poly$}&{$Var$}&\multicolumn{1}{c|}{$Poly$}&{$Var$}&\multicolumn{1}{c|}{$Poly$}&{$Var$}\\
\hline
1 & 16 & 4  & 22 & 4 & 22 & 4 & 16 & 4 & 8 & 4  \\ \hline
2 & 24 & 11  & 11 & 3 & 12 & 3 & 12 & 3 & 5 & 3 \\ \hline
3 & 32 & 17  & 6 & 2 & 6 & 2 & 6 & 2 & 4 & 3 \\ \hline
4 & 96 & 41  & 54 & 3 & 54 & 3 & 22 & 3 & 5 & 3 \\ \hline

\end{tabular}
\end{table}
In Table 3, for system 1, 2 and 4, our method matches the best of the other four methods and simultaneously has a smallest deflated system in the five methods. For system 3, although our final system has one more variable than method D, we have less polynomials.

\section{Conclusion}
 In this paper, we develop a new deflation method for refining or verifying the isolated singular zeros of polynomial systems. Given a polynomial system $\F\subset\C[\x]$ with an isolated singular zero $\p$, by computing the derivatives of the input polynomials directly or the linear combinations of the related polynomials, we prove constructively that there exists a final deflated system $\widetilde{\F}'(\x,\alphahei)$, which has an isolated simple zero $(\p,\hat{\alphahei})$, whose partial projection corresponds to the isolated singular zero $\p$ of the input system $\F$. New variables $\alphahei$ are introduced to represent the coefficients of the linear combinations of the related polynomials to ensure the accuracy of the numerical implementation.

 Compared to the previous deflation methods, on one hand, our method also has an output size depending on the depth or the multiplicity of $\p$ in theory. On the other hand, thanks to the acceleration strategies we proposed in the paper, the size of the final system in our actual computations is much less than that we give in theory. The results of the experiments we conduct give a very persuasive argument for this.

 In order to essentially have a deeper understanding of our approach, we also give some further analysis of the tolerances $\theta$ and $\varepsilon$ we use. The results of the analysis tells us that our final system is a perturbed system with a bounded perturbation in the worst case. To make our final system as accurate as possible, we also analyse the case that the tolerance $\theta$ is not introduced.

\section*{\bf Acknowledgement} The work is partially supported by NSFC Grants 11471327.

\bibliography{bibfile}
\bibliographystyle{plain}

\end{document}